\documentclass{article}


\usepackage[nonatbib, final]{neurips_2020}




\usepackage[utf8]{inputenc} 
\usepackage[T1]{fontenc}    
\usepackage{hyperref}       
\usepackage{url}            
\usepackage{booktabs}       
\usepackage{amsfonts}       
\usepackage{nicefrac}       
\usepackage{microtype}      


\usepackage[citestyle=numeric, bibstyle=numeric, uniquename=false, uniquelist=false, maxcitenames=2, maxbibnames=10, giveninits=true]{biblatex}
\addbibresource{refs.bib}
\usepackage{amsthm}
\usepackage[ruled,vlined]{algorithm2e}
\usepackage{hyperref} 
\usepackage[utf8]{inputenc}
\usepackage[mathscr]{euscript}
\usepackage{graphicx}
\usepackage{caption}
\usepackage{subcaption}
\usepackage{wrapfig}
\usepackage{amsmath}
\usepackage{amsthm}
\usepackage{amssymb}
\usepackage{thmtools}
\usepackage{thm-restate} 
\usepackage{enumitem} 
\usepackage{dsfont} 
\usepackage{tikz-network} 
\usepackage{float} 
\usepackage{xspace}
\usepackage{hyperref} 

\newcommand{\cc}[1]{\mathcal{#1}}
\newcommand{\bb}[1]{\mathbb{#1}}

\newcommand{\indep}{\perp \!\!\! \perp}
\newcommand{\ind}[1]{\mathds{1}\left \{ #1 \right \}}



\newtheorem{definition}{Definition}[section]
\newtheorem{corollary}{Corollary}[section]

\theoremstyle{definition}
\newtheorem{example}{Example}[section]


\newcommand{\aicp}{A-ICP\xspace}


\title{Active Invariant Causal Prediction: Experiment Selection through Stability}

%

\author{%
  Juan L. Gamella\\
  Seminar for Statistics\\
  ETH Zurich\\
  Switzerland\\
  \texttt{gajuan@ethz.ch} \\
  \And
  Christina Heinze-Deml\\
  Seminar for Statistics\\
  ETH Zurich\\
  Switzerland\\
  \texttt{heinzedeml@stat.math.ethz.ch} \\

}

\begin{document}

\maketitle

\begin{abstract}
    A fundamental difficulty of causal learning is that causal models can generally not be fully identified based on observational data only. Interventional data, that is, data originating from different experimental environments, improves identifiability. However, the improvement depends critically on the target and nature of the interventions carried out in each experiment. Since in real applications experiments tend to be costly, there is a need to perform the \emph{right} interventions such that as few as possible are required. 
    In this work we propose a new active learning (i.e.\ experiment selection) framework (\aicp) based on Invariant Causal Prediction (ICP) \cite{peters2016causal}.
    For general structural causal models, we characterize the effect of interventions on so-called stable sets, a notion introduced by \cite{pfister2019stabilizing}. We leverage these results to propose several intervention selection policies for \aicp which quickly reveal the direct causes of a response variable in the causal graph while maintaining the error control inherent in ICP. Empirically, we analyze the performance of the proposed policies in both population and finite-regime experiments. 
\end{abstract}

\section{Introduction}

Causal models \cite{pearl2009causal} capture the causal relationships between variables and allow us to predict how a system behaves under interventions or distribution changes. Hence, they are more powerful than probabilistic models, and can be seen as abstractions of more accurate mechanistic or physical models while retaining enough power to answer interventional or counterfactual questions \cite{peters2017elements}. Therefore, they maintain their predictive power in new, previously unseen environments \cite{HeinzeDeml2017, Subbaswamy2018LearningPM, Rothenhausler2018AnchorRH, pfister2019stabilizing}. 

The question remains if for systems of interest such models can be learned directly from data. This problem is known in the literature as \emph{causal learning}, and it is to causal models what statistical learning is to probabilistic models. Just like statistical learning, it suffers from the inherent difficulty of determining properties of a distribution from finite-sized samples. Additionally, causal learning is challenged by the fact that, even with full knowledge of the underlying observational distribution, some causal relationships cannot be established and causal models can generally not be fully identified from observational data alone \cite{pearl2009causal}. For causal directed acyclic graph models, this limit of identifiability implies that, from observational data alone, the true graph cannot be distinguished from others that lie in the same Markov equivalence class \cite{verma1990equivalence}. Under additional assumptions about the model class and noise distributions, full identifiability is still possible \cite{hoyer2009nonlinear, buhlmann2014cam, shimizu2006linear, peters2013equalvar, peters2011identifiability}. In the general case, however, identifiability can only be improved by performing interventions (experiments). Examples of such interventions are abundant in the empirical sciences, from gene knockout experiments in biology to chemical compound selection in drug discovery \cite{naik2016active}. Since such experiments tend to be costly, there is a need to pick the \emph{right} interventions, in the sense of having to do as few of them as possible. In the remainder of this section, we review existing work that addresses this problem before summarizing our contributions.



\subsection{Related work}

We use the term \emph{active causal learning} to refer to learning causal models from data while being able to actively perform interventions. In this setting, the goal is to sequentially improve identifiability, as opposed to the classical setting in machine learning \cite{settles2009active}, where the goal is to sequentially increase prediction accuracy. Existing approaches can be said to fall broadly into two categories: Bayesian and graph-theoretic. The Bayesian approach, pioneered by the works of \cite{tong2001active, murphy2001active}, selects interventions which maximize a Bayesian utility function, generally the mutual information between the graph and the hypothetical sample that the experiment would produce. More recent works build on this approach by considering experiments performed in batches under budget constraints \cite{agrawal2019abcd} or when expert knowledge is available \cite{masegosa2013interactive}, and apply such framework to learning biological networks \cite{cho2016reconstructing, ness2017bayesian}.

Among the graph-theoretic approaches, \cite{eberhardt2008almost, hyttinen13a} give bounds on the number of interventions required for identifiability under different assumptions, and \cite{he2008active, hauser2014two} provide intervention selection strategies that aim to orient the maximum number of edges in the graph. There are extensions to several settings, such as when the total number of interventions is limited \cite{ghassami2018budgeted, ghassami2019interventional}, when there are hidden variables \cite{NIPS2017_7277} or when interventions carry a cost which must be minimized \cite{kocaoglu2017cost, lindgren2018experimental}.

Both approaches make different assumptions and suffer from different drawbacks. The Bayesian approach requires exact knowledge of the intervention location and parameters. It is difficult to analyze the impact of misspecified interventions on the choice of experiments and the estimates produced by the methods \cite{walker2013bayesian}. Furthermore, it suffers from poor computational scaling \cite{ness2017bayesian} and several approximations have to be made even for small graphs \cite{agrawal2019abcd}. This further complicates giving guarantees on the result. Graph-theoretic approaches are agnostic to the underlying distribution, but generally make two strong assumptions: (1) that the Markov equivalence class has been correctly identified, which is difficult with a limited sample size, and (2) that interventions are perfectly informative (i.e.\ infinite interventional data).

\textbf{Invariant Causal Prediction and intervention stable sets} 
This work is a first attempt at a new approach which falls into neither of the previous two categories. It relies on Invariant Causal Prediction (ICP) \cite{peters2016causal}, which aims to recover the direct causes $S^*$ of a \emph{response} variable of interest $Y$  from interventional data. The general idea is that the conditional distribution of the response, given its direct causes, remains invariant when intervening on arbitrary variables in the system other than itself. ICP considers the setting where different experimental conditions of a system exist (called \emph{environments}) and an i.i.d.\ sample of each environment is available. By considering all possible subsets of the predictor variables $X$, ICP then searches for sets of \emph{plausible causal predictors}. These are sets of predictors which, if conditioned on, leave the distribution of the response invariant across the observed environments (see \autoref{sec:causal_predictors} for the formal definition). This procedure is based on testing the null hypothesis of invariance. Sets considered as plausible causal predictors given the available data are referred to as \emph{accepted sets}, and the set of direct causes of the response (its parents in the causal graph) will be among them with high probability. ICP then returns the intersection of all accepted sets as the estimate $\hat{S}$ of the direct causes. More details are given in \autoref{apx:alg}.

While ICP does not retrieve the full graph, it has some important advantages in the form of guarantees and more flexible assumptions. It requires neither knowledge of the Markov equivalence class, nor about the nature or location of the interventions performed in each environment, except that they must not act on the response. The approach assumes that the noise distribution of the response is independent from the direct causes and invariant across environments. In the general formulation, no further distributional assumptions are made. While such further assumptions can arise from the choice of tests for the invariance of the conditional distribution, non-parametric tests can be chosen \cite{heinze2018invariant}. Perhaps most importantly, ICP provides an error control with respect to the estimated causes, namely that with high probability it will not retrieve false positives. While this comes at a loss of power, this work shows that when environments are generated via an appropriate experiment selection strategy, ICP can quickly identify the direct causes while maintaining the aforementioned control. In addition to ICP, we make use of the notion of so-called \emph{intervention stable sets} \cite{pfister2019stabilizing} which relates the invariance properties of a set of predictors to graphical criteria. More details are given in \autoref{sec:stable_sets}.

\subsection{Contributions and outline}
We propose Active Invariant Causal Prediction (\aicp), an active causal learning framework based on ICP. \autoref{fig_procedure} shows its core components. In each round, a new intervention target is selected based on the sets accepted by ICP in the previous iteration. Subsequently, the corresponding experiment is performed, yielding a new sample of interventional data\footnote{While \autoref{fig_procedure} is illustrated using do-intervention notation, noise or shift interventions are also possible, on a single or multiple variables.}. 
Finally, ICP is run on the updated dataset which yields updated estimates of the accepted sets and the direct causes of $Y$. 
\begin{wrapfigure}{r}{0.51\textwidth}
\vspace{-2mm}
\small
    \centering
    \tikzstyle{block} = [rectangle, draw, text width=8em, text centered, rounded corners, minimum height=4em, minimum width=6em]
    \begin{tikzpicture}[node distance = 2cm, auto]
        \node [block] (init) {4) Update accepted sets, i.e.\ run ICP on $\cc{E}_t$};
        \node [block, right of=init, node distance=4cm] (update) {2) Perform experiment};
        \draw [->] (init) to [out=60,in=120] node {1) Choose intervention $I_t := do(X_j = x)$} (update);
        \draw [->] (update) to [out=-120,in=-60] node[align=center] {3) Collect sample $(X^t, Y^t) \sim P(X,Y|do(X_j=x))$\\$\cc{E}_t \leftarrow \cc{E}_{t-1} \cup \{(X^t, Y^t)\}$ } (init);
    \end{tikzpicture}
    \caption{Schematic of \aicp}\label{fig_procedure}
\end{wrapfigure} 
Our main contribution lies in the formulation of several policies that choose an intervention target in each round $t$. 
They are motivated by theoretical results on the invariance properties of sets of predictors (\autoref{sec:stable_sets}). In \autoref{sec:causal_predictors}, we detail how we combine these results with ICP in an active causal learning setting. 
We then propose several policies that fit into the \aicp framework in \autoref{sec:policies}. While our theoretical results do not require any parametric assumptions on the underlying structural causal model (SCM), we focus on linear SCMs in the empirical evaluation in \autoref{sec:experiments}. In population and finite regime experiments, the proposed policies outperform a random baseline policy across a large range of experimental settings. Finally, we compare \aicp against ABCD \cite{agrawal2019abcd} and discuss the observed tradeoffs between error control and power.

\section{Intervention stable sets}\label{sec:stable_sets}
We now present the theoretical results that motivate the intervention selection policies in each round $t$ of \aicp.  
We use the framework of structural causal models (SCMs) \cite{Wright1921, Haavelmo1944, Bollen1989}. A SCM consists of (i) a collection of structural assignments that functionally relate each variable in the system to its direct causes and (ii) a joint distribution over the noise variables which are required to be jointly independent. A SCM induces a joint distribution over the variables in the system as well as a graph over the associated vertices (e.g.\ see Definition 6.2 in \cite{peters2017elements}).
In the following setting, we formalize the assumptions required for the results derived in this section. Importantly, for the results presented in this section, we do not require the SCM to be linear. All proofs can be found in \autoref{apx:proofs}.

\paragraph{Setting 1} (adapted from setting 2 in \cite{pfister2019stabilizing}) Let $X \in \boldsymbol{\cc{X}} = \cc{X}_1 \times ... \times \cc{X}_p$ be predictor variables, $Y \in \mathbb{R}$ a \emph{response} variable and $I = (I_1,...,I_m) \in \boldsymbol{\cc{I}} = \cc{I}_1 \times ... \times \cc{I}_m$ intervention variables which are unobserved and formalize the interventions present in the collection of intervention environments $\cc{E}$. Assume there exists a SCM $\cc{S}^\cc{E}$ over $(I,X,Y)$ that can be represented by a directed acyclic graph $\cc{G}(\cc{S}^\cc{E})$, in which the intervention variables are source nodes. Further assume intervention variables do not appear in the structural equation of $Y$, that is, assume there are no interventions on the response. For each $e\in\cc{E}$, there is a SCM $\cc{S}_e$ over $(I^e, X^e, Y^e)$ such that $\cc{G}(\cc{S}_e)=\cc{G}(\cc{S}^\cc{E})$, in which only the equations with $I^e$ on the right hand side change with respect to $\cc{S}^\cc{E}.$ Furthermore, assume that the distribution of $(I^e, X^e, Y^e)$ is absolutely continuous with respect to a product measure that factorizes.

No further assumptions are made on the size or type of the intervention, i.e.\ they can be do, noise or shift interventions on a single or multiple variables. To simplify notation, let $\text{PA}(i)$, $\text{CH}(i)$ and $\text{AN}(i)$ be the parents, children and ancestors of the variable $X_i$, respectively.

The notion of intervention stable sets, introduced in \cite{pfister2019stabilizing}, allows characterizing sets of plausible causal predictors from d-separation relationships in the graph. While stable sets are generally not equivalent to the sets of plausible causal predictors, we here derive theoretical results for them and then analyze under which conditions these apply to the plausible causal predictors (\autoref{sec:causal_predictors}).

\begin{definition}[intervention stable set \cite{pfister2019stabilizing}] Let for any
set $S \subseteq\{1,...,p\}$, $X_S$ be the vector containing all variables $X_k, k \in S$. Given setting 1 and a set of environments $\cc{E}$, we call a set $S \subseteq \{1,...,p\}$ \emph{intervention stable} under $\cc{E}$ if the d-separation $I \indep_\cc{G} Y \mid X_S$ holds in $\cc{G}(\cc{S}^\cc{E})$ for any intervention $I$ which is active in an environment $e \in \cc{E}$.
\end{definition}

In other words, a set of predictors is stable if it d-separates the response from all interventions (see \autoref{example_stable_sets}). In the following, let $\bb{S}_\cc{E}$ denote the collection of sets which are intervention stable under $\cc{E}$.
The stable sets allow properties of the graph structure and the interventions to be inferred:

\begin{restatable}[intervened parents appear on all intervention stable sets]{lemma}{lemmaparents}
\label{lem:parents_stable}
Let $\cc{E}$ be a set of observed environments and let $j \in \text{PA}(Y)$ be directly intervened on in $\cc{E}$. Then,
$$S \subseteq \{1,...,p\} \text{ is intervention stable} \implies j \in S.$$
\end{restatable}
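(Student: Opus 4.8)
The plan is to argue by contradiction, exhibiting a single open d-connecting path between an active intervention variable and the response whenever a directly intervened parent is excluded from $S$. First I would unpack what ``$j$ is directly intervened on in $\cc{E}$'' means in the language of Setting~1: since the interventions are encoded by the source nodes $I$, a direct intervention on $X_j$ corresponds to an intervention variable $I_k$ that is a parent of $X_j$ in $\cc{G}(\cc{S}^\cc{E})$ and that is active in some environment $e \in \cc{E}$. This $I_k$ is precisely a component of $I$ whose d-separation from $Y$ must hold in order for $S$ to be intervention stable.

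Next I would suppose, toward a contradiction, that $S$ is intervention stable yet $j \notin S$. Because $j \in \text{PA}(Y)$ there is a directed edge $X_j \to Y$, and because $X_j$ is directly intervened on there is a directed edge $I_k \to X_j$. Concatenating these gives the length-two directed path $I_k \to X_j \to Y$. The only intermediate vertex on this path is $X_j$, which is a non-collider (a chain $\to X_j \to$), and the path contains no colliders at all. Hence the path is blocked given $X_S$ if and only if $X_j \in X_S$, i.e.\ if and only if $j \in S$.

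Since we assumed $j \notin S$, the path $I_k \to X_j \to Y$ is open given $X_S$, so $I_k$ and $Y$ are d-connected given $X_S$ in $\cc{G}(\cc{S}^\cc{E})$. As $I_k$ is active in some $e \in \cc{E}$, this contradicts the requirement $I \indep_\cc{G} Y \mid X_S$ from the definition of intervention stability. Therefore $j \in S$, as claimed.

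I expect the only delicate point to be the bookkeeping in the first step: making the identification between ``directly intervened on'' and ``having an active intervention source node as a parent'' fully precise, and confirming that this $I_k$ indeed appears among the interventions whose separation from $Y$ is demanded by the stability definition. Once that correspondence is pinned down, the graphical argument is immediate, since the connecting path has a single, non-collider intermediate node and so no collider-activation subtleties can arise.
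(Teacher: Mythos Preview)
Your argument is correct and mirrors the paper's proof exactly: assume $j\notin S$, observe the path $I\to X_j\to Y$ is unblocked by $S$ (single non-collider intermediate node), contradicting intervention stability. The paper's version is terser but structurally identical.
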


\begin{restatable}[sets containing descendants of directly intervened children are unstable]{lemma}{lemmachildren}
\label{lem:children_unstable}
Let $i \in \text{CH}(Y)$ be directly intervened on in $\cc{E}$. Then, any set $S \subseteq \{1,...,p\}$ which contains descendants of $i$ is not intervention stable.
\end{restatable}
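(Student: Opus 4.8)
The plan is to prove the contrapositive-style statement directly: if $S$ contains a descendant of $i$, then the d-separation $I \indep_\cc{G} Y \mid X_S$ fails for the intervention $I$ acting on $i$, so $S$ is not intervention stable. The key observation is that $i \in \text{CH}(Y)$ means there is an edge $Y \to X_i$ in $\cc{G}(\cc{S}^\cc{E})$, and since $i$ is directly intervened on, there is an intervention variable $I_k$ with an edge $I_k \to X_i$. I want to exhibit an open path from some such $I_k$ to $Y$ given $X_S$, which contradicts stability.

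First I would set up the relevant vertices: let $d \in S$ index a descendant of $i$ (so $d = i$ is allowed, or $d$ is a proper descendant), and let $I_k$ be an intervention variable pointing into $X_i$. The natural candidate path is $I_k \to X_i \leftarrow Y$, which is a collider at $X_i$. The crucial point is that conditioning on a descendant of a collider opens the collider: since $X_d$ with $d \in S$ is a descendant of $X_i$ (or is $X_i$ itself), conditioning on $X_S$ activates the collider at $X_i$ on this path. Then I would argue the two edge-segments of the path are themselves open given $X_S$. The segment $I_k \to X_i$ is a single edge, so it cannot be blocked; the segment $X_i \leftarrow Y$ is a single edge into the collider node and is likewise not blocked. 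Hence the path $I_k \to X_i \leftarrow Y$ is d-connecting given $X_S$, so $I_k \not\indep_\cc{G} Y \mid X_S$, and $S$ is not intervention stable.

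The main subtlety — and what I expect to be the real obstacle — is handling the collider condition carefully when $d \neq i$, i.e.\ the descendant in $S$ is strictly below $X_i$ rather than $X_i$ itself. The d-separation rule for colliders requires that the collider node $X_i$ or one of its descendants be in the conditioning set, and I must confirm that a descendant of $i$ lying in $S$ suffices to open the collider at $X_i$, which is exactly the standard formulation of d-connection. I also need to rule out that conditioning on other elements of $X_S$ inadvertently blocks this particular path; but since the path has length two with the only non-endpoint being the collider $X_i$, and both edges are directed, the only way to block it would be to make the collider inactive, which conditioning on a descendant precludes. A minor edge case worth a sentence is ensuring $Y \notin S$ and $I_k \notin S$, which holds because $S \subseteq \{1,\dots,p\}$ indexes only predictor variables and $I_k$ is an (unobserved) intervention source node, while $Y$ is the response.

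Finally, I would note that the intervention variable $I_k$ exists precisely because $i$ is \emph{directly} intervened on in $\cc{E}$ (this is what provides the edge $I_k \to X_i$), and that $I_k$ is active in the relevant environment $e \in \cc{E}$, so the stability definition's quantification over active interventions is met by this witness. This closes the argument: the exhibited open path certifies $\neg(I \indep_\cc{G} Y \mid X_S)$ for an active intervention, hence $S$ is not intervention stable.
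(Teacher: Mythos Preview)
Your proof is correct and follows essentially the same approach as the paper: exhibit the length-two path $I \to X_i \leftarrow Y$ and observe that the collider at $X_i$ is activated because $S$ contains a descendant of $i$. The paper's version is terser (a single sentence), but your additional remarks on why no other node on the path can be blocked and on the edge cases $Y,I_k\notin S$ are all valid and do not depart from the paper's argument.
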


\begin{restatable}[stability of the empty set]{lemma}{lemmaempty}
\label{lem:empty_set}
Let $\cc{E}$ be any set of environments. Then,
$$\emptyset \in \bb{S}_\cc{E} \iff \cc{E} \text{ contains no interventions on variables in } \text{AN}(Y).$$
\end{restatable}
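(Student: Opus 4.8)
The plan is to unfold the definition and reduce the biconditional to a purely graphical statement, since stability of a set is defined entirely through d-separation in $\cc{G}(\cc{S}^\cc{E})$ (the absolute-continuity assumption of Setting~1 plays no role here). By definition, $\emptyset \in \bb{S}_\cc{E}$ holds iff $I \indep_\cc{G} Y \mid X_\emptyset$ for \emph{every} intervention variable $I$ that is active in some $e \in \cc{E}$; equivalently, iff every such $I$ is d-separated from $Y$ by the empty conditioning set. So I would fix a single active intervention variable $I$, determine exactly when $I \indep_\cc{G} Y$ (given $\emptyset$) holds, and then quantify over all active $I$ at the end. This turns the lemma into: ``every active $I$ is d-separated from $Y$ by $\emptyset$'' $\iff$ ``no active intervention targets a variable in $\text{AN}(Y)$.''

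The core step is to characterize d-separation by the empty set. With no conditioning, a path is blocked precisely when it contains a collider, so $I$ and $Y$ are d-connected given $\emptyset$ iff there is a \emph{collider-free} path between them. I would first establish the standard structural fact that a collider-free path decomposes into an ascending segment (moving to parents) followed by a descending segment (moving to children), meeting at a single ``peak'' node that is a common ancestor of the two endpoints. The decisive observation is then that, per Setting~1, each $I$ is a \emph{source node}: it has no parents, so the ascending segment starting at $I$ must be empty and the peak must be $I$ itself. Consequently every collider-free path from $I$ to $Y$ is in fact a directed path $I \to \cdots \to Y$, and I can conclude that $I$ is \emph{not} d-separated from $Y$ by $\emptyset$ iff $I \in \text{AN}(Y)$.

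It remains to translate the condition $I \in \text{AN}(Y)$ into the language of the lemma, namely an intervention on a variable in $\text{AN}(Y)$. Since $I$ is a source whose children are exactly the variables it intervenes on, a directed path from $I$ to $Y$ must leave $I$ through some child $X_j$, and then $X_j \in \text{AN}(Y)$; conversely, if $I$ targets some $X_j \in \text{AN}(Y)$, concatenating the edge $I \to X_j$ with a directed path $X_j \rightsquigarrow Y$ exhibits $I$ as an ancestor of $Y$. (The assumption that no intervention acts on $Y$ rules out the degenerate target $X_j = Y$.) Hence $I \in \text{AN}(Y)$ iff $I$ targets a variable in $\text{AN}(Y)$. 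Assembling the pieces and negating, $\emptyset \in \bb{S}_\cc{E}$ iff \emph{no} active $I$ is an ancestor of $Y$ iff $\cc{E}$ contains no interventions on variables in $\text{AN}(Y)$, as claimed.

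I expect the main obstacle to be the structural lemma on collider-free paths and its careful application at the source node: one must argue rigorously that the ascending part vanishes at $I$, handle the boundary cases where the peak coincides with an endpoint, and account for intervention variables with multiple targets (so that ``$I$ is an ancestor of $Y$'' corresponds to the existence of \emph{at least one} target in $\text{AN}(Y)$). Everything else is routine quantifier bookkeeping over the active interventions.
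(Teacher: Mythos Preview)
Your proposal is correct and follows essentially the same route as the paper: both arguments reduce to the observation that, with empty conditioning, the active paths are exactly the collider-free ones, and that because each intervention variable $I$ is a source node any collider-free path from $I$ to $Y$ must be a directed path $I \to X_j \to \cdots \to Y$ through some target $X_j \in \text{AN}(Y)$. Your explicit ``peak decomposition'' of collider-free paths and your handling of multi-target interventions make the argument more formal than the paper's terse version, but the core idea is identical.
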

That is, the empty set is stable if and only if none of the interventions in $\cc{E}$ occurred upstream of $Y$. More structure can be inferred by considering the number of stable sets in which a predictor appears:

\begin{definition}[stability ratio]
\label{def:stab_ratio}
Given a set of environments $\cc{E}$, the \emph{stability ratio} of a variable $i \in \{1,...,p\}$ is defined as
\begin{align*}
    r_\cc{E}(i) := \frac{1}{|\bb{S}_\cc{E}|}\sum_{S\in\bb{S}_\cc{E}} \ind{i \in S},
\end{align*}
i.e.\ the proportion it appears in the intervention stable sets under $\cc{E}$.
\end{definition}

From \autoref{lem:parents_stable} it follows that parents which are directly intervened on in at least one environment in $\cc{E}$ have a stability ratio of 1. Conversely, by \autoref{lem:children_unstable} descendants of children directly intervened on in at least one environment have a ratio of 0. Furthermore, the stability ratio of any ancestor, regardless of the interventions, is always larger than one half:

\begin{restatable}[ancestors appear on at least half of all stable sets]{proposition}{propratio}
\label{prop:ratio_ancestors}
Let $\cc{E}$ be any set of observed environments. Then, for any $j \in \{1,...,p\}$,
$$r_{\cc{E}}(j) < 1/2 \implies j \notin \text{AN}(Y).$$
\end{restatable}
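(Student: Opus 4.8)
The plan is to prove the contrapositive: if $j \in \text{AN}(Y)$ then $r_\cc{E}(j) \ge 1/2$, i.e.\ $j$ lies in at least half of the intervention stable sets. First I would set up a counting argument via an injection. Let $\bb{S}_\cc{E}^{-} = \{S \in \bb{S}_\cc{E} : j \notin S\}$ and $\bb{S}_\cc{E}^{+} = \{S \in \bb{S}_\cc{E} : j \in S\}$, and consider the map $\phi: S \mapsto S \cup \{j\}$. This map is trivially injective (one recovers $S$ by deleting $j$), so if I can show it sends $\bb{S}_\cc{E}^{-}$ into $\bb{S}_\cc{E}^{+}$ then $|\bb{S}_\cc{E}^{-}| \le |\bb{S}_\cc{E}^{+}|$, and since these two sets partition $\bb{S}_\cc{E}$ this immediately yields $r_\cc{E}(j) = |\bb{S}_\cc{E}^{+}|/|\bb{S}_\cc{E}| \ge 1/2$. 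Everything therefore reduces to a single stability-preservation claim.

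The crux is the following lemma, which I expect to be the main obstacle: \emph{if $j \in \text{AN}(Y)$, $S$ is intervention stable, and $j \notin S$, then $S \cup \{j\}$ is intervention stable as well.} Naively this is not obvious, since enlarging a conditioning set can open a previously blocked path by activating a collider. The key is that $X_j$ is an ancestor of $Y$, and I would exploit this through the moralization characterization of d-separation. For any intervention variable $I$ active in $\cc{E}$, stability of $S$ means $I \indep_\cc{G} Y \mid X_S$, equivalently that $X_S$ separates $I$ from $Y$ in the moralized graph $M$ of the subgraph induced on the ancestral set $V = \text{An}(\{I,Y\} \cup X_S)$. Because $X_j \in \text{An}(Y) \subseteq V$, adding $X_j$ to the conditioning set does not enlarge the ancestral set, so $\text{An}(\{I,Y\} \cup X_S \cup \{X_j\}) = V$ and the relevant moral graph $M$ is unchanged. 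Since $X_j$ is already a vertex of $M$ and adding a vertex to a separating set in an undirected graph can only preserve (never destroy) separation, $X_{S \cup \{j\}}$ still separates $I$ from $Y$ in $M$, giving $I \indep_\cc{G} Y \mid X_{S \cup \{j\}}$. As this holds for every active $I$, the set $S \cup \{j\}$ is stable, which is exactly what $\phi$ requires.

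The ancestry hypothesis is used precisely in the invariance of the ancestral vertex set, and this is where the whole argument lives; for a non-ancestor $X_j$ the ancestral set could grow and new moral edges could appear, so the conclusion would genuinely fail. I would expect the only delicate point to be invoking the moralization criterion with the disjointness of $\{I\}$, $\{Y\}$ and $X_S$ (which holds since the intervention and response variables are not among the predictor indices), and the elementary fact about enlarging undirected separators. As a sanity check and an elementary alternative to the moralization route, one can argue directly by d-connecting paths: if adding $X_j$ opened an $I$--$Y$ path, it would do so by activating a collider $c$ with $X_j \in \{c\} \cup \text{DE}(c)$ and $\{c\}\cup\text{DE}(c)$ disjoint from $S$; splicing the directed path $c \to \cdots \to X_j \to \cdots \to Y$ onto the $I$-to-$c$ segment then produces an $I$--$Y$ walk open given $X_S$, contradicting stability of $S$. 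The bookkeeping there (ensuring no other collider on the $I$-side segment was activated only through $X_j$, which is handled by taking $c$ closest to $I$) is exactly the mess that the moralization criterion lets me avoid, so I would present the moralization argument as the clean proof.
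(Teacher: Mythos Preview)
Your overall architecture---contrapositive, the injection $S\mapsto S\cup\{j\}$ from $\bb{S}_\cc{E}^{-}$ to $\bb{S}_\cc{E}^{+}$, and the reduction to the single claim that adding an ancestor of $Y$ to a stable set preserves stability---is exactly what the paper does. The difference is in how that key claim is established. The paper argues directly with d-connecting paths: it case-splits on whether $j$ lies on a given $I$--$Y$ path and whether it is a collider, isolates a collider $c$ with $j\in\text{DE}(c)$ and $S\cap\text{DE}(c)=\emptyset$, and then exhibits an explicit path $Y\leftarrow\cdots\leftarrow j\leftarrow\cdots\leftarrow c\leftarrow\cdots\leftarrow I$ that is active under $S$. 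This is precisely the ``sanity check'' alternative you sketch at the end. Your primary route via the Lauritzen moralization criterion is genuinely different and cleaner: once you observe that $X_j\in\text{An}(Y)$ forces the ancestral set (hence the moral graph) to stay the same, the claim is immediate from monotonicity of separators in undirected graphs, with no case analysis and none of the bookkeeping about which collider to choose or why the spliced walk is an active path. The paper's argument buys self-containment (no appeal to an external equivalence theorem), while yours buys transparency and robustness---exactly the messiness you flag in the direct approach is what the moralization detour eliminates.
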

\begin{corollary}
\label{cor:ratio_parents}
The parents of the response always have a stability ratio of or above $1/2$.
\end{corollary}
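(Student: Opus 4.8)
The plan is to obtain the corollary as a direct logical consequence of Proposition~\ref{prop:ratio_ancestors}, so no new machinery is required. The key structural observation is the inclusion $\text{PA}(Y) \subseteq \text{AN}(Y)$: every parent (direct cause) of the response is in particular an ancestor of it. This is immediate from the standard definition of ancestry in a DAG under the notation fixed in Setting~1, and it is the only graph-theoretic fact I need.

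With that inclusion in hand, I would take the contrapositive of Proposition~\ref{prop:ratio_ancestors}. The proposition states that $r_{\cc{E}}(j) < 1/2 \implies j \notin \text{AN}(Y)$ for every $j \in \{1,\dots,p\}$; its contrapositive reads $j \in \text{AN}(Y) \implies r_{\cc{E}}(j) \geq 1/2$. Applying this to an arbitrary $j \in \text{PA}(Y)$ and invoking $\text{PA}(Y) \subseteq \text{AN}(Y)$ yields $r_{\cc{E}}(j) \geq 1/2$, which is precisely the asserted bound.

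Honestly, there is essentially no obstacle here: the statement is a one-line corollary, and all the substantive work has already been carried out in establishing Proposition~\ref{prop:ratio_ancestors}. The only points worth double-checking are conventions rather than content, namely that parents are counted among the ancestors and, symmetrically, that $Y$ is not counted among its own ancestors, so that the claim concerns genuine predictors $j \in \{1,\dots,p\}$. Both are consistent with the definitions of $\text{PA}$ and $\text{AN}$ introduced in Setting~1.
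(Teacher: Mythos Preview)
Your proposal is correct and matches the paper's approach: the corollary is stated immediately after Proposition~\ref{prop:ratio_ancestors} with no separate proof, precisely because it follows at once from the inclusion $\text{PA}(Y)\subseteq\text{AN}(Y)$ together with the contrapositive of the proposition. There is nothing to add.
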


Note that the converse is not generally true, i.e.\ variables which are not ancestors can have a stability ratio of or above $1/2$, even after being intervened on. In \autoref{sec:policies} we exploit \autoref{lem:parents_stable}, \autoref{lem:empty_set} and \autoref{cor:ratio_parents} to construct intervention selection policies.

\section{From stable sets to causal predictors}
\label{sec:causal_predictors}
The results derived in \autoref{sec:stable_sets} apply to intervention stable sets. If we are to use these results to construct an intervention selection policy for \aicp, we need to understand under which conditions they apply directly to the sets of plausible causal predictors. 

\begin{definition}[plausible causal predictors \cite{peters2016causal}]\label{def:plausible_causal_predictors}  We call a set of variables $S \subseteq\{1,...,p\}$ \emph{plausible causal predictors} under a set of environments $\cc{E}$ if for all $e, f \in \cc{E}$ and all $x$
\begin{align}\label{icp_hypothesis}
    Y^e \vert {X^e_S=x} \;\; \stackrel{d}{=} \;\; Y^f \vert {X^f_S=x}, 
\end{align}
i.e.\ the conditional distribution is the same in all environments. Let $\bb{C}_ \cc{E}$ denote the collection of sets which are plausible causal predictors under $\cc{E}$.
\end{definition}

Given a collection of environments $\cc{E}$, the collection of accepted sets of the ICP algorithm is an estimate of $\bb{C}_\cc{E}$.  The following proposition establishes the relationship between intervention stable sets and sets of plausible causal predictors.

\begin{restatable}[intervention stable sets are plausible causal predictors]{proposition}{propcausalpredictors}
\label{prop:causal_predictors}
Let $\cc{E}$ be a set of observed environments. Then, for all intervention stable sets $S \subseteq \{1,...,p\}$, it holds that $S\in\bb{C}_\cc{E}$.
\end{restatable}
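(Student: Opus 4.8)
The plan is to translate the purely graphical hypothesis (intervention stability, a statement about d-separation in $\cc{G}(\cc{S}^\cc{E})$) into a probabilistic statement about the joint law induced by $\cc{S}^\cc{E}$, and then show that this conditional independence forces the conditional law of $Y$ given $X_S$ to be the same in every environment. First I would use the elementary fact that d-separation of $Y$ from a set is equivalent to d-separation of $Y$ from each element of that set: since $S$ is intervention stable we have $I \indep_\cc{G} Y \mid X_S$ for every intervention variable $I$ active in some $e \in \cc{E}$, and therefore $Y \indep_\cc{G} I_{\mathrm{act}} \mid X_S$ jointly, where $I_{\mathrm{act}}$ collects all intervention variables active in some environment of $\cc{E}$. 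Because Setting 1 guarantees that the joint distribution of $(I,X,Y)$ is absolutely continuous with respect to a factorizing product measure, the density factorizes along the DAG $\cc{G}(\cc{S}^\cc{E})$, so the global Markov property applies and the d-separation upgrades to the genuine conditional independence $Y \indep I_{\mathrm{act}} \mid X_S$ in the distribution induced by $\cc{S}^\cc{E}$.

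Next I would interpret this conditional independence in terms of regular conditional distributions. It says that there is a Markov kernel $\kappa(\cdot \mid x)$ such that the $\cc{S}^\cc{E}$-conditional law of $Y$ given $(X_S, I_{\mathrm{act}})$ depends on $X_S$ alone: $P(Y \in \cdot \mid X_S = x, I_{\mathrm{act}} = \iota) = \kappa(\cdot \mid x)$ for (almost) every $(x,\iota)$. Here the absolute-continuity assumption is what lets me speak of well-defined conditional densities and make the ``for all $x$'' statement in \autoref{def:plausible_causal_predictors} pointwise rather than merely almost everywhere.

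The third step connects the environments to configurations of the intervention variables. In Setting 1 every environment-specific SCM $\cc{S}_e$ shares the graph $\cc{G}(\cc{S}^\cc{E})$ and differs only in the equations in which $I^e$ appears on the right-hand side; the intervention variables are source nodes and never enter the structural equation of $Y$. Consequently each environment $e$ is characterized by a distribution $\nu_e$ over $I$, and the $\cc{S}_e$-law of $(X^e,Y^e)$ is the $\cc{S}^\cc{E}$-law of $(X,Y)$ mixed over $I \sim \nu_e$. I would then compute the environment-$e$ conditional of $Y$ given $X_S = x$ by disintegration, integrating the $\cc{S}^\cc{E}$-conditional $P(Y \in \cdot \mid X_S = x, I = \iota)$ against $\nu_e(\cdot \mid X_S = x)$; by the previous step the integrand equals $\kappa(\cdot \mid x)$ independently of $\iota$, so the mixture collapses to $\kappa(\cdot \mid x)$ for \emph{every} environment. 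Hence $Y^e \mid X_S^e = x \stackrel{d}{=} \kappa(\cdot \mid x)$ for all $e$, which is exactly condition \eqref{icp_hypothesis}, giving $S \in \bb{C}_\cc{E}$.

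The hard part will be making the third step rigorous rather than the Markov step, which is standard once factorization is in hand. Specifically, I must carefully justify that every environment is faithfully represented as a configuration of the intervention variables inside the single graph $\cc{G}(\cc{S}^\cc{E})$ (using that only the $I^e$-equations change and that $I$ does not act on $Y$), and I must justify the interchange of conditioning and mixing as a genuine disintegration/regular-conditional-distribution argument. This is precisely where the absolute-continuity hypothesis earns its keep, ensuring all the conditionals are well defined and that the pointwise ``for all $x$'' equality survives the mixing.
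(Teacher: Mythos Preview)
Your argument is correct and follows the same underlying logic as the paper: pass from d-separation to conditional independence via the global Markov property (using the absolute-continuity/factorization assumption of Setting~1), and then deduce that the conditional law of $Y$ given $X_S$ cannot vary with the environment.

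The execution differs in one instructive way. You work directly with $I_{\mathrm{act}}$, obtain $Y \indep I_{\mathrm{act}} \mid X_S$ under $\cc{S}^\cc{E}$, and then represent each environment as a distribution $\nu_e$ over $I$ so that the environment-specific conditional is a $\nu_e(\cdot\mid X_S=x)$-mixture of kernels that all equal $\kappa(\cdot\mid x)$. The paper instead introduces an auxiliary node $E$ (uniform over $\cc{E}$) as a common parent of the intervention variables, builds a single extended SCM $\cc{S}^\cc{E}_{\text{full}}$, observes that $S$ still d-separates $Y$ from $E$ in the enlarged graph, and concludes $E \indep Y \mid X_S$ under $P_{\text{full}}$; the invariance $p(Y^e=y\mid X_S^e=x)=p(Y=y\mid X_S=x)$ then drops out in one line. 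The two routes are equivalent, but the paper's $E$-node trick neatly absorbs exactly the step you flag as ``the hard part'': once the environments are encoded as a node inside one SCM, there is no separate mixing/disintegration to justify and no need to argue that the $\cc{S}^\cc{E}$-conditionals given $(X_S,I)$ coincide with the $\cc{S}_e$-conditionals on the right support. Your approach, on the other hand, makes more transparent \emph{why} stability is the right hypothesis: it is precisely what makes the kernel $\iota$-free so that any mixture over $I$ collapses.
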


While $\bb{S}_\cc{E} \subseteq \bb{C}_\cc{E}$, it is not generally true that $\bb{S}_\cc{E} = \bb{C}_\cc{E}$, even under the faithfulness assumption (see \autoref{example_stable_vs_plausible}). However, when the parameters of the SCM are sampled from a continuous distribution, we conjecture that the set of parameters for which $\bb{S}_\cc{E} \neq \bb{C}_\cc{E}$ has probability zero. We call the assumption that  $\bb{S}_\cc{E} = \bb{C}_\cc{E}$ \emph{stability-faithfulness}, and adopt this assumption in the following.

Finally, we make use of the following corollary in \aicp. In each iteration an intervention target is selected and a sample is collected from the new experimental environment (see \autoref{fig_procedure}). Denote by $\cc{E}_t = \left \{e_i : i \in \{1,..,t\} \right \}$ the set of observed environments at iteration $t$, and assume $\cc{E}_t \subseteq \cc{E}_{t+1}$. 

\begin{corollary}\label{cor_speedup}
Let $\cc{E}_t, \cc{E}_{t+1}$ be sets of observed environments such that $\cc{E}_t \subseteq \cc{E}_{t+1}$. Then, it follows that if $S$ is not a set of plausible causal predictors under $\cc{E}_t$, it is not under $\cc{E}_{t+1}$ either.
\end{corollary}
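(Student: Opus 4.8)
The plan is to argue directly from \autoref{def:plausible_causal_predictors}, exploiting the fact that membership in $\bb{C}_\cc{E}$ is defined by a condition that is \emph{universally quantified} over pairs of environments in $\cc{E}$. The key observation is that enlarging the environment set can only add constraints, never remove them, so the property of being a plausible causal predictor is antitone in $\cc{E}$. Concretely, I would prove the equivalent statement $\bb{C}_{\cc{E}_{t+1}} \subseteq \bb{C}_{\cc{E}_t}$, which is just the contrapositive of the claim.

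First I would unfold the hypothesis. Saying that $S$ is \emph{not} a plausible causal predictor under $\cc{E}_t$ means the universal condition in \eqref{icp_hypothesis} fails for $\cc{E}_t$; negating the quantifiers, there exist two environments $e, f \in \cc{E}_t$ and some value $x$ such that
\[
Y^e \mid X^e_S = x \;\; \stackrel{d}{\neq} \;\; Y^f \mid X^f_S = x.
\]
Next, I would invoke the assumption $\cc{E}_t \subseteq \cc{E}_{t+1}$: since $e, f \in \cc{E}_t$, we also have $e, f \in \cc{E}_{t+1}$. Therefore the very same pair of environments $e, f$ and the same value $x$ witness a violation of \eqref{icp_hypothesis} with respect to $\cc{E}_{t+1}$, so the universal condition fails for $\cc{E}_{t+1}$ as well. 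By \autoref{def:plausible_causal_predictors}, this means $S \notin \bb{C}_{\cc{E}_{t+1}}$, which is exactly the desired conclusion.

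There is no real technical obstacle here: the result is an immediate monotonicity consequence of the definition, and the entire content is that a counterexample pair surviving into the larger environment set keeps witnessing non-invariance. The only point requiring a small amount of care is the logical negation of the nested ``for all $e, f$ and all $x$'' quantifier in \eqref{icp_hypothesis}, so that the failure under $\cc{E}_t$ is correctly pinned to a specific pair $e, f$ rather than merely asserted abstractly; once that pair is identified, the subset inclusion $\cc{E}_t \subseteq \cc{E}_{t+1}$ does all the remaining work. This corollary is what justifies the incremental nature of \aicp, since it guarantees that sets ruled out in earlier rounds need never be reconsidered as the environment set grows.
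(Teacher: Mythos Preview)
Your proof is correct and is exactly the argument the paper has in mind: the corollary is stated without an explicit proof, since it follows immediately from \autoref{def:plausible_causal_predictors} by the monotonicity observation you spell out. There is nothing to add.
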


\section{Constructing an active learning policy}\label{sec:policies}
Even in the population setting---in the absence of estimation errors---the capacity of ICP to retrieve the parents relies heavily on the informativeness of the environments. For example, if none of the interventions are upstream of the response, the empty set is intervention stable and is returned as estimate of the parents.
While \cite{peters2016causal} gives some sufficient conditions for the identifiability of the true causal parents, it is not entirely clear what an optimal intervention is. If we assume stability-faithfulness, by \autoref{lem:parents_stable} we know that, in the absence of estimation errors, a direct intervention on a parent is sufficient for it to appear in the ICP estimate. However, it is not a necessary condition (see \autoref{example_dir_intervention}). As a first approach, we treat direct interventions on the parents as ``maximally informative'', and the goal of the proposed policies is to pick such interventions. For simplicity and to allow comparison with ABCD \cite{agrawal2019abcd}, we consider single-variable interventions.


\paragraph{Proposed policies} 
To increase the chances of picking a parent of the response as an intervention target, the proposed policies can make use of three strategies:
\begin{enumerate}[itemsep=.5ex, leftmargin=.5cm]
    \item (\emph{Markov strategy}, (``markov'')) This strategy selects intervention targets from within the Markov blanket, which contains the parents. Under linearity, in the population setting the Markov blanket can be directly obtained from an ordinary least squares (OLS) regression over all predictors (\autoref{apx:experiments}). In the finite regime, we turn to the Lasso \cite{tibshirani1996regression} to obtain an estimate.
    \item (\emph{empty-set strategy}, (``e'')) If an observational sample is available, we can test whether the invariance in Eq.~\eqref{icp_hypothesis} holds for the empty set when considering the observational and the interventional sample $e_t$. If it does, by \autoref{lem:empty_set} we know that the latest intervention target is not upstream of the response, and therefore not a parent. We hence discard the target from future interventions.
    \item (\emph{ratio strategy}, (``r'')) By \autoref{cor:ratio_parents}, a variable is not a parent if it appears on less than half of all intervention stable sets. As an estimate we use the accepted sets (computed based on the environments $\cc{E}_{t-1}$) and, if a variable appears on less than half of such sets, we do not add it to the pool of possible intervention targets for the current iteration. Note that unlike in (2.), we do not discard it from future interventions. This is important in the finite regime, where parents may for some iterations appear in less than half of all accepted sets due to testing errors.
\end{enumerate}
Furthermore, we exclude identified parents, i.e.\ variables with a stability ratio of 1, from the pool of possible intervention targets for all of the above strategies.
Each strategy narrows down the set of possible intervention targets, and the actual target is then chosen uniformly at random.
For multiple-variable interventions we can simply pick $k$ targets instead of one. If a policy combines several of the above strategies, the final set of possible intervention targets is taken as the intersection of the respective strategies' sets. 
For instance, by combining the \emph{ratio} and the \emph{empty-set} strategies we increase the chance of picking a parent but also exclude non-ancestors which retain a stability ratio above one half after being intervened on. As non-ancestors can have a stability ratio above one half, strategies exploiting the value of the stability ratio (e.g.\ by intervening on the variable whose value of $r_\cc{E}(i)$ is closest to $1/2$ in absolute value) are not competitive.
\newpage
\paragraph{Algorithm outline} In the \aicp framework, policies are treated as interchangeable modules which define two functions: \texttt{next\_intervention} and \texttt{first\_intervention}. At each iteration of the procedure, the accepted sets (given the current environments $\cc{E}_t$) are passed to the policy by calling \texttt{next\_intervention}, which then returns the next intervention target.
Since the accepted sets are not available when selecting the initial intervention, a potentially available (observational) sample can be used to guide a first choice. For example, policies employing the \emph{Markov} strategy compute an estimate of the Markov blanket in this step, and pick a variable within this estimate as the first intervention; other policies pick an intervention at random. For simplicity, we assume the availability of an initial observational sample, even though this is not necessary for all of the proposed strategies.

By \autoref{cor_speedup}, at each iteration of \aicp it suffices for ICP to consider only the sets accepted in the previous iteration, which provides a substantial speed up as not all subsets of predictors need to be re-tested. To account for multiple testing of the accepted sets, we need to apply a correction to the significance level of ICP. Due to the strong dependence between the tests, we use a Bonferroni correction and run ICP at a significance level of $\alpha/T$, where $\alpha$ is the desired overall level and $T$ is the total number of iterations for which \aicp is run. Hence, the coverage guarantee for the final \aicp estimate $\hat{S}(\cc{E}_T)$ is $P(\hat{S}(\cc{E}_T) \subseteq S^*) \geq 1-\alpha$. 
Further details and an outline of ICP can be found in \autoref{apx:alg}. More details about the multiple testing correction are given in \autoref{apx:50_iter_fwer} and a sensitivity analysis with respect to the chosen overall significance level $\alpha$ can be found in Appendix~\ref{apx:figures:level_strength}. 
Finally,  an analysis of the computational complexity of algorithm \ref{alg_active_icp} is given in \autoref{apx:complexity}.

\begin{algorithm}[H]
\label{alg_active_icp}
\SetAlgoLined
\SetKwInOut{Input}{Input}
\SetKwInOut{Output}{Output}
\Output{$\hat{S}(\cc{E}_T)$ estimate of the parents of the response}
\Input{\texttt{policy} an intervention selection policy,\newline
$(X^0, Y^0)$ sample from initial environment,\newline
$T$ number of iterations,\newline
$\alpha$ overall \aicp significance level}
$\cc{E}_0 \leftarrow \{(X^0, Y^0)\}$\;
\texttt{accepted sets} $\leftarrow$ all sets of predictors\;
\texttt{next\_intervention} $\leftarrow$ \texttt{policy.first\_intervention}($\cc{E}_0$)\;
\For{$t=1:T$}{
  perform $\texttt{next\_intervention}$ and collect sample $(X^t, Y^t)$\;
  $\cc{E}_t \leftarrow \cc{E}_{t-1} \cup \{(X^t, Y^t)\}$\;
  \texttt{accepted sets}, $\hat{S}(\cc{E}_t)$ $\leftarrow$ \texttt{ICP}($\cc{E}_t$, \texttt{accepted sets}, $\alpha/T$) \tcp*{see \autoref{cor_speedup}}
  \texttt{next\_intervention} $\leftarrow$ \texttt{policy.next\_intervention(accepted sets)}\;
 }
 \Return $\hat{S}(\cc{E}_T)$
 \caption{\aicp}
\end{algorithm}

\section{Experiments}\label{sec:experiments}

We evaluate policies that use different combinations of the strategies in both the population and finite sample setting, using simulated data from randomly chosen linear SCMs. In addition to averaging over different SCMs, for every SCM, each policy is run a number of times with different random seeds to account for the stochastic component of the policies. Further details about the experimental settings and links to code to reproduce the experiments can be found in \autoref{apx:experiments}. \autoref{tab:exp_settings} summarizes the considered settings and experimental parameters.

\subsection{Population setting}\label{subsec:pop_setting}
For the population setting, we evaluate the \emph{Markov} and \emph{ratio} strategies. Population experiments simplify an initial evaluation of the proposed policies. First, since interventions are perfectly informative, the performance of the policies can be compared exclusively in terms of their choice of targets, without worrying about (1) the parameters of the intervention, and (2) how many observations must be allocated to the experiment, neither of which are trivial problems. Second, we can ignore estimation errors, e.g.\ the Markov blanket here simply corresponds to the variables with non-zero coefficient in a population OLS regression over all predictors. Lastly, by \autoref{lem:parents_stable} we have that in the population setting intervening on each predictor variable once is sufficient to produce the correct estimate. This yields a limit on the number of iterations for which \aicp has to be run. Hence, in the population setting we sample without replacement from the pool of possible intervention targets. This is also why the \emph{empty-set} strategy is not applicable in the population setting---we never intervene on the same variable twice in any case. This stands in contrast to the finite regime.

We compare the performance of the two proposed policies (``markov`` and ``markov + r'') with each other and a baseline \emph{random} policy which picks intervention targets at random from all predictors. In \autoref{fig_pop}, we compare how quickly the policies recover the causal parents in terms of the Jaccard similarity between the estimate $\hat{S}$ and the truth $S^*$: $\vert \hat{S} \cap S^* \vert / \vert \hat{S} \cup S^* \vert $. This metric is equal to one if and only if $\hat{S} = S^*$. Furthermore, we assess how many interventions the policies require in total to achieve \emph{exact} recovery. Note that when the Markov blanket is composed of just the parents, both proposed policies are equivalent. This is the case for 405 out of the 1000 SCMs. Therefore, we plot the results for the remaining graphs separately in \autoref{fig_pop} (panels (b) and (d)). In both of the considered metrics, combining the \emph{Markov} and \emph{ratio} strategies produces the best performing policy.

\begin{figure}[H]
\centering
\includegraphics[width=1.0\textwidth]{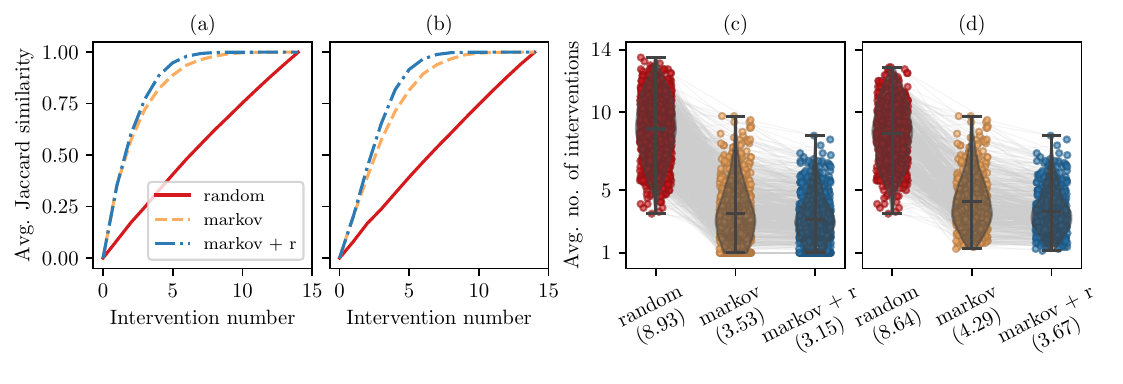}
\caption{(population setting) Left: Average Jaccard similarity as a function of the number of interventions, for all 1000 SCMs of size 15 (a), and those for which the Markov blanket contains more than just the parents (b). In the first intervention the policies ``markov'' and ``markov + r'' perform equally, as they decide among the same pool of intervention targets (the Markov blanket). After interventions on all predictors all direct causes are determined. Right: Average number of interventions needed to achieve exact recovery of the causal parents, for each one of the 1000 SCMs (c) and those for which the Markov blanket contains more than just the parents (d). Each SCM is represented by a dot and connected across policies by a line. The total average of interventions employed by each policy is given below its label.}
\label{fig_pop}
\end{figure}

\subsection{Finite sample setting}

For finite samples, we individually evaluate the effect of the three strategies put forward in \autoref{sec:policies}, as well as combinations of them. In total we have 7 policies, each using a different combination of strategies, plus the random baseline policy. For the sample allocation, we fix the size of the sample collected per intervention; we perform experiments for 10, 100 and 1000 observations per sample. The same metrics as in \autoref{subsec:pop_setting} are shown in Figures~\ref{fig_20_iter_jaccard_finite} and~\ref{fig_20_iter_int_numbers_finite}\footnote{Additionally, we plot the family-wise error rate $\hat{P}(\hat{S} \not\subseteq S^*)$ in \autoref{fig_fwer} (Appendix~\ref{apx:50_iter_fwer}), confirming that the ICP error control is indeed maintained.}. As real experiments tend to be extremely costly in terms of time and money, the goal is to achieve good performance after as few interventions as possible. Hence, we here focus on the first 20 iterations of \aicp while the total number of iterations is 50. Figures~\ref{fig_50_iter_jaccard_finite} and~\ref{fig_50_iter_int_numbers_finite} show the performance over all 50 iterations.

The results show interesting patterns. In general, we observe that the choice of the optimal strategy depends on the number of observations one can allocate to each intervention and how many interventions can be performed in total.
Relying on the Markov blanket estimate (policies labeled with ``markov'') leads to good initial but poor performance for larger $t$, independently of what other strategies are used. This estimate is obtained by performing an L1-regularized least squares regression on all predictors (i.e.\ the Lasso \cite{tibshirani1996regression}), picking the regularization parameter for each SCM by cross validation. In the first few iterations, the policies quickly identify the parents contained in the estimate, as can be seen in \autoref{fig_20_iter_jaccard_finite}. However, when not all parents are contained in the estimate the policies become stuck performing non-informative interventions. As a result, for many of the SCMs not all direct causes are recovered after reaching the limit of iterations, clearly seen for 50 iterations in \autoref{fig_50_iter_int_numbers_finite} (\autoref{apx:figures}). This problem is alleviated at larger sample sizes, where the Lasso yields a better estimate of the Markov blanket. We provide further error analyses in Appendix~\ref{apx:figures:mb_estimation}.

\begin{figure}[t]
\centering
\includegraphics[width=1.0\textwidth]{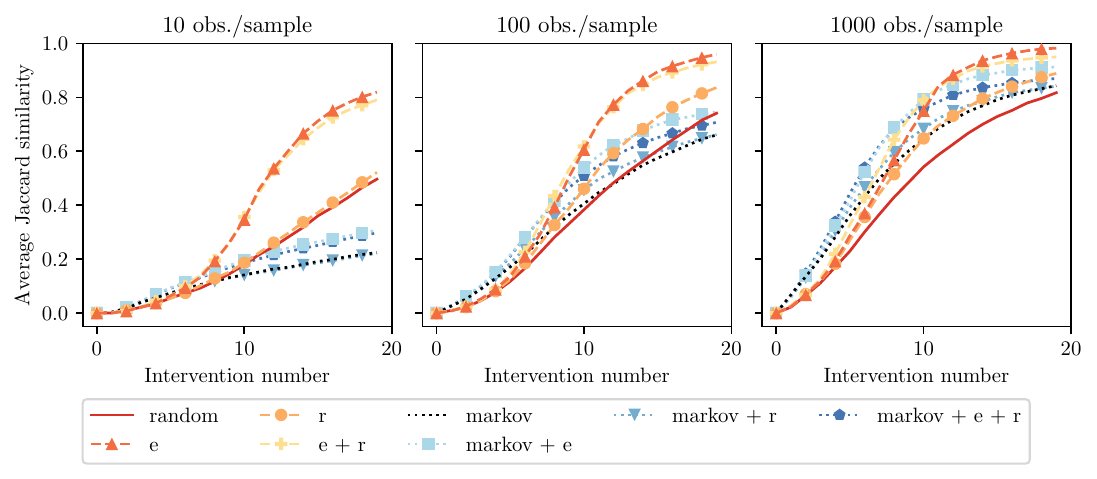}
\caption{(finite regime) Average Jaccard similarity for 300 SCMs of size 12 as a function of the number of interventions for 10, 100 and 1000 observations per sample. Here, $\alpha = 0.01$.}
\label{fig_20_iter_jaccard_finite}
\end{figure}

In principle, the other two strategies may also suffer from estimation errors: For the \emph{empty-set} strategy, the empty set may be wrongly accepted after an intervention on a parent, which is then discarded from future interventions. This problem of statistical power is attenuated for larger sample sizes and higher intervention strengths. For the \emph{ratio} strategy, falsely rejecting stable sets and wrongly accepting unstable sets can bias the estimate of the stability ratio of some parents and thereby keep \aicp from intervening on such parents. 
Hence, if there are no constraints on the number of total interventions, the \emph{random} baseline policy is the most robust option as it is not as affected by estimation errors as the other policies. However, for small $t$, it is clearly outperformed by most of the other policies. As can be seen in \autoref{fig_20_iter_jaccard_finite}, the gain over the \emph{random} policy becomes larger as the sample size increases.

For our experimental settings, we find that the performance of the \emph{empty-set} strategy is quite robust, outperforming the remaining policies across the different sample sizes and a large range of intervention numbers. Using the \emph{Markov} and/or the \emph{ratio} strategy in addition only yields clear improvements for larger sample sizes. In Appendix~\ref{apx:figures:level_strength} we further analyze the performance for different intervention strengths and significance levels. Importantly, while the discussed possible estimation errors affect the choice of the optimal intervention target selection, the ICP error control on the estimate $\hat{S}$ is unaffected by this and remains intact (also see \autoref{fig_fwer}). 

\subsection{Comparison with ABCD}

We compare the performance of \aicp against that of the Bayesian ABCD strategy \cite{agrawal2019abcd}. We choose this strategy as it allows directly learning the parents of the response. It hence lends itself to a more fair comparison than strategies which estimate  the full graph. That said, the comparison is still not straightforward, as both strategies make different assumptions. ABCD requires a large observational sample, which it then uses to sample from the posterior through a bootstrap procedure based on GIES \cite{hauser2012characterization}, but the interventional sample size can be as small as one. \aicp does not rely on a large observational sample but is regression-based and requires more than one observation per intervention. We establish a middle ground by providing both methods with an observational sample of size 1000 and 10 observations per intervention. 
In \autoref{fig_abcd_vs_aicp}, we compare the methods over 50 iterations in terms of (1) the Jaccard similarity, and (2) the family-wise error rate (FWER) $\hat{P}(\hat{S} \not\subseteq S^*)$, i.e.\ the probability of having one or more false positives in the estimate of the causal parents. The results for varying observational sample sizes are similar and can be found in Appendix~\ref{apx:figures:abcd}. Details about the experimental setup are given in \autoref{apx:experiments}.

\begin{figure}[H]
\centering
\includegraphics[width=1.0\textwidth]{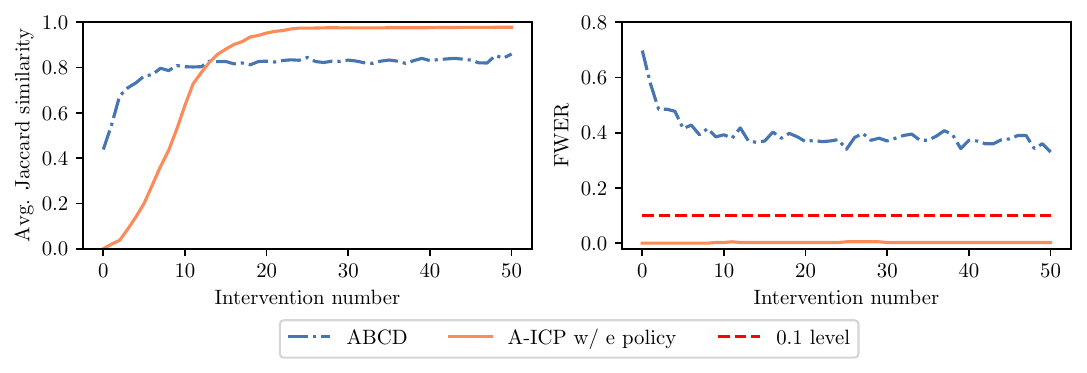}
\caption{Average Jaccard similarity and family-wise error rate of the ABCD and \aicp estimates, as a function of the number of interventions. Exploiting high probability candidates from the posterior over graphs based on the observational sample, ABCD shows better results in terms of Jaccard similarity in the first few iterations while not approaching a Jaccard similarity of one for large $t$. The good initial performance comes at the cost of false positives which are only reduced somewhat as $t$ increases. In contrast, \aicp remains conservative in the first few iterations, often returning the empty set as an estimate. While \aicp retains its error control over all iterations, its power increases steadily with the number of interventions, approaching a Jaccard similarity of one. Here, $\alpha=0.1$.}
\label{fig_abcd_vs_aicp}
\end{figure}

\section{Discussion}
In \autoref{sec:stable_sets} we characterize the effect of interventions on the stability of sets of predictors. We leverage these results to construct several intervention selection policies for \aicp. We find that the  \emph{empty-set} strategy shows good performance across different sample sizes and number of performed interventions. The \emph{ratio} and the \emph{Markov} strategy additionally yield improvements for larger sample sizes and for small intervention numbers. All policies outperform the \emph{random} baseline policy across a large range of settings. While ICP is often criticized for its lack of power, we see that \aicp can quickly overcome this weakness while maintaining the guarantees of ICP.

We welcome the discussion whether the proposed policies for \aicp could be improved as many interesting questions remain. ICP does not require knowledge of the intervention locations in each environment. This makes it robust to interventions with off-target effects, i.e.\ effects on variables other than the target. Furthermore, \aicp  allows for combining data from existing environments with possibly unknown intervention targets with data from experiments that are performed with this knowledge.
On the other hand, one might ask whether, since we know the intervention location when running \aicp, we discard useful information. Of the proposed policies, only the ones that use the \emph{empty-set} strategy leverage this information.

Finally, the results from \autoref{sec:stable_sets} are quite general in the sense that they make no assumptions on the function class or noise distributions of the SCM. As such, it would be interesting to assess to what extent \aicp improves the power of more general extensions of Invariant Causal Prediction, such as nonlinear ICP \cite{heinze2018invariant} or ICP for sequential data \cite{pfister2019invariant}. Overall, this work shows that in an active learning setting, one can construct competitive methods with invariance as the underlying principle for causal discovery.

\newpage
\section*{Discussion of broader impact}
Any method that learns from finite data is subject to statistical estimation errors and model assumptions that necessarily limit the full applicability of its findings. Unfortunately, study outcomes are not always communicated with the required qualifications. As an example, statistical hypothesis testing is often employed carelessly, e.g.\ by using p-values to claim ``statistical significance'' without paying attention to the underlying assumptions \cite{colquhoun2014investigation}. There is a danger that this problem gets exacerbated when one aims to estimate causal structures. Estimates from causal inference algorithms could be claimed to ``prove'' a given causal relationship, ruling out various alternative explanations that one would consider when explaining a statistical association. For example, ethnicity could be claimed to have a causal effect on criminality and thereby used as a justification for oppressive political measures. While this would represent a clear abuse of the technology, we as researchers have to ensure that similar mistakes in interpretation are not made unintentionally. This implies being conscientious about understanding as well as stating the limitations of our research.

While there is a risk that causal inference methods are misused as described above, there is of course also an array of settings where causal learning---and in particular active causal learning---can be extremely useful. As our main motivation we envision the empirical sciences where interventions correspond to physical experiments which can be extremely costly in terms of time and/or money. For complex systems, as for example gene regulatory networks in biology, it might be difficult for human scientists to choose informative experiments, particularly if they are forced to rely on data alone. Our goal is to develop methods to aid scientists to better understand their data and perform more effective experiments, resulting in significant resource savings.
The specific impact of our proposed methodology will depend on the application. For the method we propose in this work, one requirement for application would be that the experiments yield more than one data point (and ideally many), so that our invariance-based approach can be employed. In future work, we aim to develop methodology that is geared towards the setting where only very few data points per experiment are available.

\begin{ack}
We would like to thank Niklas Pfister, Jonas Peters, Armeen Taeb, Brian McWilliams and Nicolai Meinshausen for valuable discussions and comments on the manuscript. The research leading to these results was supported by a grant from the ''la Caixa'' Foundation (ID 100010434), with code LCF/BQ/EU18/11650051.
\end{ack}

\printbibliography
\newpage

\appendix
\renewcommand{\thetable}{\thesection.\arabic{table}}
\renewcommand{\thefigure}{\thesection.\arabic{figure}}
\section{Intervention stable sets, plausible causal predictors and informative interventions}
\label{apx:examples}

\subsection{Intervention stable sets}\label{apx:stable_set_example}
A set of predictors $S$ is an intervention stable set if it d-separates the response from all interventions, i.e. if the d-separation statement $I \indep_\cc{G} Y | X_S$ holds in $\cc{G}(\cc{S}^\cc{E})$ for all interventions $I$ active in $\cc{E}$. An example follows:

\begin{center}
\begin{minipage}{.50\textwidth}
\begin{example}
\label{example_stable_sets}

Let $\cc{E}$ be a collection of environments with direct interventions on $X_0$ and $X_4$, as shown in the graph. Then, the intervention stable sets are
\begin{align*}
\bb{S}_\cc{E} =&
    \{0\},\\
    &\{0, 4\},\\
    &\{0, 3, 4\}\\
    &\{0, 1\}\\
    &\{0, 1, 4\}\\
    &\{0, 1, 3, 4\}.
\end{align*}
\end{example}
\end{minipage}
\begin{minipage}{.49\textwidth}
\begin{figure}[H]
    \centering
    \begin{tikzpicture}
        \SetVertexStyle[FillColor=white, MinSize=0.8\DefaultUnit, TextFont=\small]
        \SetEdgeStyle[Arrow=-stealth, TextFont=\small, LineWidth=1]
        \Vertex[y=1, x=-1,label=$X_0$]{0}
        \Vertex[y=1, x=1, label=$X_1$]{1}
        \Vertex[y=0, label=$Y$]{Y}
        \Vertex[y=-1.5,label=$X_3$]{3}
        \Vertex[y=-0.5, x=1, label=$X_4$]{4}
        \Vertex[y=2, x=0, label=$I_1$, shape=diamond, style=white, fontcolor=red]{I1}
        \Vertex[y=0.5, x=2, label=$I_2$, shape=diamond, style=white, fontcolor=red]{I2}
    
        \Edge[Direct](0)(Y)
        \Edge[Direct](1)(Y)
        \Edge[Direct](Y)(3)
        \Edge[Direct](4)(3)
        \Edge[Direct, color=red](I1)(0)
        \Edge[Direct, color=red](I2)(4)
    \end{tikzpicture}
\end{figure}
\end{minipage}
\end{center}

\subsection{Stable sets vs.\ plausible causal predictors}\label{apx:stable_sets_plausible_causal}
While $\bb{S}_\cc{E} \subseteq \bb{C}_\cc{E}$, it is not generally true that $\bb{S}_\cc{E} = \bb{C}_\cc{E}$. Importantly, this does not change when assuming faithfulness as the following example illustrates. 

\begin{example}
\label{example_stable_vs_plausible}
Take the following SCM,

\begin{center}
\begin{minipage}{0.4\textwidth}
\begin{figure}[H]
    \centering
    \begin{tikzpicture}
        \SetVertexStyle[FillColor=white, MinSize=0.8\DefaultUnit, TextFont=\small]
        \SetEdgeStyle[Arrow=-stealth, TextFont=\small, LineWidth=1]
        \Vertex[label=$X_0$]{0}
        \Vertex[x=-2, y=-2, label=$X_1$]{1}
        \Vertex[x=2, y=-2, label=$X_2$]{2}

        \Edge[Direct, label=$w_{01}$](0)(1)
        \Edge[Direct, label=$w_{02}$](0)(2)
        \Edge[Direct, label=$w_{12}$](1)(2)
    \end{tikzpicture}
\end{figure}
\end{minipage}
\begin{minipage}{0.4\textwidth}
\begin{align*}
    X_0 &:= \varepsilon_0\\
    X_1 &:= w_{01}X_0 + \varepsilon_1\\
    X_2 &:= w_{02}X_0 + w_{12}X_1 + \varepsilon_2
\end{align*}
\end{minipage}
\end{center}

with $\varepsilon_i \sim_{\text{i.i.d.}} \mathcal{N}(\mu_i,\sigma^2_i)$ noise variables such that $\varepsilon_i \indep \varepsilon_j\; \forall i,j$. Consider $Y:=X_1$ and the conditioning sets $S_0 = \{0\}$ and $S_2 = \{2\}$. In the following, we assess the invariance of the conditional distributions $Y\vert X_0$ and $Y\vert X_2$ under interventions.
 The conditional distributions of $Y\vert X_0$ and $Y\vert X_2$ are both Gaussian and below we compute their expectations and variances. For $Y\vert X_0$ we have: 
\begin{align*}
    \bb{E}(Y\vert X_0) &=  \bb{E}(Y) + \frac{ \mathrm{Cov}(Y, X_0)}{ \mathrm{Var}(X_0)}  (X_0 - \bb{E}(X_0))\\
     &= w_{01}\mu_0+\mu_1 + \frac{w_{01}\sigma^2_0}{\sigma^2_0}(X_0 - \mu_0) = \mu_1 + w_{01} X_0
\end{align*}
\begin{align*}
    \mathrm{Var}(Y\vert X_0) &=  \mathrm{Var}(Y) - \frac{ \mathrm{Cov}(Y, X_0)^2}{ \mathrm{Var}(X_0)} = w_{01}^2\sigma^2_0 + \sigma^2_1 - \frac{(w_{01}\sigma^2_0)^2}{\sigma^2_0} = \sigma^2_1
\end{align*}

For $Y\vert X_2$ we have: 
\begin{align*}
    \bb{E}(Y\vert X_2) &=  \bb{E}(Y) + \frac{ \mathrm{Cov}(Y, X_2)}{ \mathrm{Var}(X_2)}  (X_2 - \bb{E}(X_2))\\
    & = w_{01}\mu_0+\mu_1 + \frac{\sigma_0^2(w_{01}w_{02}+w_{01}^2w_{12})+w_{12}\sigma^2_1}{\sigma_0^2(w_{02}^2+w_{12}^2w_{01}^2+2w_{02}w_{12}w_{01}) + w_{12}^2\sigma_1^2 + \sigma^2_2} (X_2 - \bb{E}(X_2))
\end{align*}
\begin{align*}
    \mathrm{Var}(Y\vert X_2) &=  \mathrm{Var}(Y) - \frac{ \mathrm{Cov}(Y, X_2)^2}{ \mathrm{Var}(X_2)} \\
    &= w_{01}^2\sigma^2_0 + \sigma^2_1 - \frac{(\sigma_0^2(w_{01}w_{02}+w_{01}^2w_{12})+w_{12}\sigma^2_1)^2}{\sigma_0^2(w_{02}^2+w_{12}^2w_{01}^2+2w_{02}w_{12}w_{01}) + w_{12}^2\sigma_1^2 + \sigma^2_2}
\end{align*}

If we additionally assume $\mu_i = 0, w_{ij} = 1 \; \forall i,j$ and $\sigma^2_1=\sigma^2_2=1$, the above expressions become
$$
    \bb{E}(Y\vert X_2) = \frac{1}{2} X_2 \qquad \text{and} \qquad \mathrm{Var}(Y\vert X_2) = \sigma^2_0 + 1 - \frac{(2\sigma^2_0+1)^2}{4\sigma^2_0+2} = \frac{1}{2}.
$$
Consider now an intervention on $X_0$. We have that $S_0=\{0\}$ is intervention stable and a set of plausible causal predictors. On the other hand, $S_2 = \{2\}$ does not d-separate $Y$ from the intervention on $X_0$, and is not intervention stable; however, for interventions that affect only the variance of $X_0$ (i.e.\ $\sigma^2_0$), $S_2$ is a set of plausible causal predictors. Under this setting, we have that $\bb{S}_\cc{E} \subset \bb{C}_\cc{E}$.
\end{example}

\autoref{example_stable_vs_plausible} shows that $\bb{S}_\cc{E} \neq \bb{C}_\cc{E}$. However, one might ask how often this happens in practice. In the example, this only happens when we set the weights, means and variances to very particular values. When these parameters are sampled from a continuous distribution, we conjecture that the set of parameters for which $\bb{S}_\cc{E} \neq \bb{C}_\cc{E}$ has probability zero. We call the assumption that $\bb{S}_\cc{E} = \bb{C}_\cc{E}$ \emph{stability-faithfulness}.

\subsection{Informative interventions} If we make the assumption that $\bb{C}_\cc{E} = \bb{S}_\cc{E}$, by \autoref{lem:parents_stable} we know that, in the absence of estimation errors,  a direct intervention on a parent is sufficient for it to appear in the ICP estimate. However, it is not a necessary condition, as is shown in the following example.

\begin{center}
\begin{minipage}{.50\textwidth}
\begin{example}
\label{example_dir_intervention}
Let $\cc{E}$ be a collection of two environments: one without interventions and one with a direct intervention on $X_2$, as shown in the graph. The intervention stable sets are
\begin{align*}
\bb{S}_\cc{E} =&\{0, 1, 2\},\\
    &\{0, 1, 3\},\\
    &\{0, 1, 2, 3\}.
\end{align*}
Therefore,
$$S(\cc{E}) = \bigcap_{S: S\in \bb{S}_\cc{E}} S = \{0, 1\},$$
\end{example}
\end{minipage}
\begin{minipage}{.49\textwidth}
\begin{figure}[H]
    \centering
    \begin{tikzpicture}
        \SetVertexStyle[FillColor=white, MinSize=0.8\DefaultUnit, TextFont=\small]
        \SetEdgeStyle[Arrow=-stealth, TextFont=\small, LineWidth=1]
        \Vertex[y=.9, x=-1.8,label=$X_0$]{0}
        \Vertex[y=.9, x=1.8, label=$X_1$]{1}
        \Vertex[y=0, label=$X_2$]{2}
        \Vertex[y=-1.35,label=$X_3$]{3}
        \Vertex[y=-2.7, label=$Y$]{4}
        \Vertex[y=1.5, x=0, label=$I$, shape=diamond, style=white, fontcolor=red]{I}
        
        \Edge[Direct, color=red](I)(2)
        \Edge[Direct](0)(2)
        \Edge[Direct](0)(4)
        \Edge[Direct](1)(2)
        \Edge[Direct](1)(4)
        \Edge[Direct](2)(3)
        \Edge[Direct](3)(4)
    \end{tikzpicture}
\end{figure}
\end{minipage}
\end{center}
which shows that parents can appear in the intersection of intervention stable sets without being directly intervened on.
In this case, a direct intervention on $X_2$ is very informative, as it reveals two parents simultaneously. To the best of our knowledge it is not clear when situations like the above arise, or how they can be detected from the accepted sets. Therefore, as a first approach we consider direct interventions on the parents as ``maximally informative'', and the goal of the proposed policies is to pick such interventions.

\section{Detailed description of ICP}\label{apx:alg}

Here we present a slightly adapted version of Invariant Causal Prediction \cite{peters2016causal}. In contrast to the original formulation, \autoref{alg:icp_generic} takes \texttt{candidate sets} as an additional, optional argument. If \texttt{candidate sets} is not provided, \autoref{alg:icp_generic} corresponds to the original ICP formulation where the null hypothesis $H_{0,S}$ needs to be tested for all subsets of the predictors. As detailed in \autoref{cor_speedup}, \aicp (algorithm \autoref{alg_active_icp}) does not require testing all subsets in each iteration. Hence, when ICP is called as a subroutine in \aicp only the \texttt{accepted sets} from the previous iteration are provided as \texttt{candidate sets} to ICP. 

In general terms, the null hypothesis $H_{0,S}$ states that the distribution of the response $Y$ conditional on the predictors $X_S$ is invariant across the different environments. Depending on which ICP version is employed, the specific formulation of null hypothesis is adapted to the respective problem setting. In the linear case, one can test for the equality of the regression coefficients and the noise variances across environments but other options are also possible (for details, please see \cite{peters2016causal}). When using nonlinear ICP \cite{heinze2018invariant}, the environment is considered as an additional variable $E$ in the system and the null hypothesis then corresponds to $Y \indep E \vert X_S$ which is tested using a non-parametric conditional independence test. To formulate the algorithm below generically, we leave open what formulation and test is chosen for $H_{0,S}$. 

\begin{algorithm}[H]
\caption{ICP} \label{alg:icp_generic}
\SetAlgoLined
\SetKwInOut{Input}{Input}
\SetKwInOut{Output}{Output}
\Output{\texttt{accepted sets} sets for which the null hypothesis cannot be rejected,\newline 
$\hat{S}(\cc{E})$ estimate of the parents of the response}
\Input{i.i.d.\ samples of ($X, Y$) from different environments $\cc{E}$,\newline 
\texttt{candidate sets} sets for which to test the null hypothesis,\newline
$\alpha$ significance level}
\If{\emph{\texttt{candidate sets} is \texttt{null}}}{
\texttt{candidate sets} $\leftarrow S \subseteq \{1, \ldots, p \}$}
\For{\emph{\textbf{each}} $S$ in \emph{\texttt{candidate sets}}}{
Test whether $H_{0,S}$ holds at level $\alpha$.
}
$\hat{S}(\cc{E}) := \bigcap_{S:H_{0,S} \text{ not rejected}} S $\;
\texttt{accepted sets} $\leftarrow \{S \mid H_{0,S} \text{ not rejected}\}$\;
\Return \texttt{accepted sets}, $\hat{S}(\cc{E})$
\end{algorithm} 

\section{Analysis of computational complexity}
\label{apx:complexity}
The runtime of \aicp depends on (i) the runtime of ICP, and (ii) the runtime of the chosen  intervention selection policy. The runtime of ICP depends on the complexity $c(N,e,k)$ of testing the invariance from Eq.~\eqref{icp_hypothesis} for a set of predictors $S$ of size $k$ over a total of $N$ observations from $e$ environments. 
Thus, for each iteration of \aicp, the cost of running ICP on all sets of predictors is
$$\sum_{S\subseteq \{1,...,p\}} c(N,e,|S|) \leq 2^pc(N,e,p).$$
Furthermore, let $s(N,e,p)$ denote the complexity of the chosen intervention selection policy. In this notation, the complexity of running \aicp for a total of $T$ iterations is
\begin{align}
    \label{eq:complexity}
    O(T2^pc(N,e,p)s(N,e,p)).
\end{align}

In the experiments of section \ref{sec:experiments}, we test invariance by performing a least-squares regression of the response on the predictors, and then running a two-sample t-test and an F-test \cite[section 3.1.2]{peters2016causal} over the residuals. Under this approach, the complexity of testing a single set of predictors of size $k$ is the cost of performing a least-squares regression and computing the residuals ($O(k^2N)$) and the cost of performing the t-test and F-test over each split of the $e$ environments ($O(eN)$). Thus, $c(N,e,k) = O(N(k^2 + e))$.

The cost of the empty-set strategy corresponds to that of testing the empty set over the initial and current environments, i.e. $c(N,2,0)$. For the ratio strategy, one must compute the stability ratio (c.f. \autoref{def:stab_ratio}), which in the worst case (all sets are accepted) incurs a cost of $O(p2^p)$. The Markov strategy carries the cost of performing a Lasso regression in the first iteration, which is dominated by the other terms in \autoref{eq:complexity}.

By \autoref{cor_speedup}, at each iteration it suffices for ICP to consider only the sets accepted in the previous iteration. On average, this provides a substantial speed up as not all $2^p$ subsets of predictors need to be re-tested in each iteration. However, the complexity is still exponential in the number of variables $p$, which limits the applicability of \aicp to ``large $p$'' settings. Nonetheless, \aicp can still be useful in settings where the time needed to carry out an experiment far outweighs the computation time to select the next experiment, which is common in empirical sciences like biology.

\section{Additional experimental results}
\label{apx:figures}
Here, we present additional experimental results. In \autoref{apx:figures:fwer_avg}, we show the average number of interventions until exact recovery (\autoref{fig_20_iter_int_numbers_finite}) for the finite-sample experiments presented in \autoref{sec:experiments}. In \autoref{apx:figures:50_iter}, we provide additional results for the total 50 iterations over which the policies are run: the family-wise error rate is shown in \autoref{fig_fwer}, Figures~\ref{fig_50_iter_jaccard_finite} and~\ref{fig_50_iter_int_numbers_finite} show the Jaccard similarity and the average number of iterations until exact recovery, respectively. The error analysis of the Markov blanket estimation procedure is displayed in \autoref{apx:figures:mb_estimation}, \autoref{fig_mb_estimation}. In \autoref{apx:figures:abcd}, we present the results from running ABCD and \aicp with different sizes of the initial observational sample. Finally, \autoref{apx:figures:level_strength} contains additional results comparing the interplay between the \aicp significance level and the performance for different intervention strengths.

\subsection{Average number of interventions for exact recovery}
\label{apx:figures:fwer_avg}

\begin{figure}[H]
\centering
\includegraphics[width=1.0\textwidth]{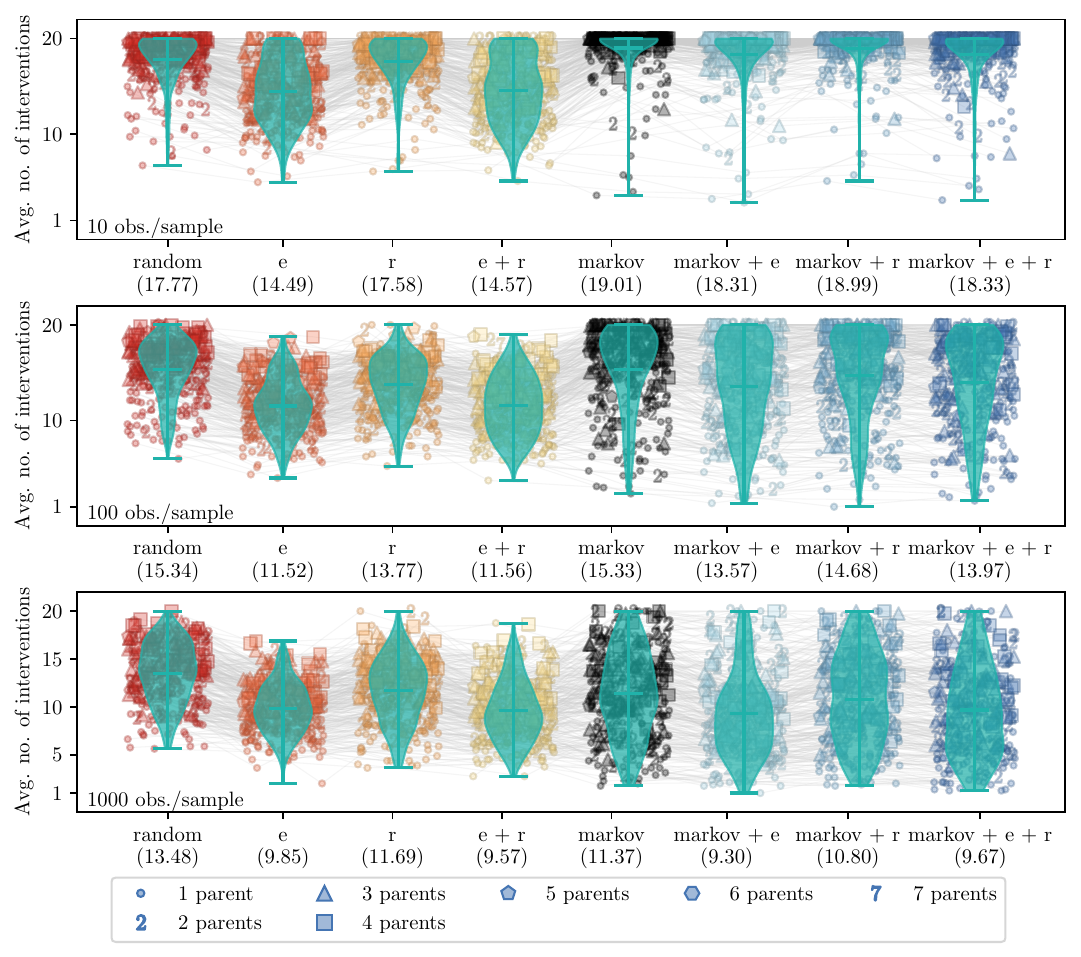}
\caption{(finite regime) Average number of interventions until the causal parents are recovered exactly for the first 20 iterations of \aicp, for each one of the 300 SCMs. If $\hat{S}\neq S^*$ at $t=20$, we set the statistic to 20. For each SCM, we average the performance over the 8 different random seeds considered. The average performance of each SCM is represented by a dot and connected across policies by a grey line. The total average of interventions employed by each policy is given below its label. The ``e'' policy performs well across all sample sizes, and is the best performer except at 1000 obs./sample where it falls behind the ``e + r'' and ``Markov + e`` policies.}
\label{fig_20_iter_int_numbers_finite}
\end{figure}


\subsection{Results for 50 interventions}
\label{apx:figures:50_iter}

We run the policies for a total of 50 interventions, to evaluate their performance in a setting where more experimental rounds are possible.

\subsubsection{Family-wise error rate}\label{apx:50_iter_fwer}
In \autoref{fig_fwer} we plot the family-wise error rate (FWER) $\hat{P}(\hat{S} \not\subseteq S^*)$. Recall that to achieve  FWER control across all iterations, we have to apply a correction to the level at which ICP is run in each iteration of \aicp (also see \autoref{apx:alg}, algorithm \autoref{alg_active_icp}). Due to the strong dependence between the tests, we use a Bonferroni correction by running ICP at iteration $t$ at the level $\alpha/T$ where $\alpha$ is the overall significance level and $T$ is the total number of iterations. \autoref{fig_fwer} confirms that the FWER is indeed kept below the $0.01$ significance level at which \aicp is run, maintaining the coverage guarantees provided by Invariant Causal Prediction (ICP). The FWER lies well below the nominal level of $0.01$ due to the construction of the estimate $\hat{S}$. The error control rests on the fact that the true set of causal parents is rejected with probability smaller than $\alpha/T$ in each round of \aicp. However, even if a mistake is made and the true set is rejected, accepting other sets and computing their intersection to obtain $\hat{S}$ may still result in $\hat{S} \subseteq S^*$.

\begin{figure}[H]
\centering
\includegraphics[width=1.0\textwidth]{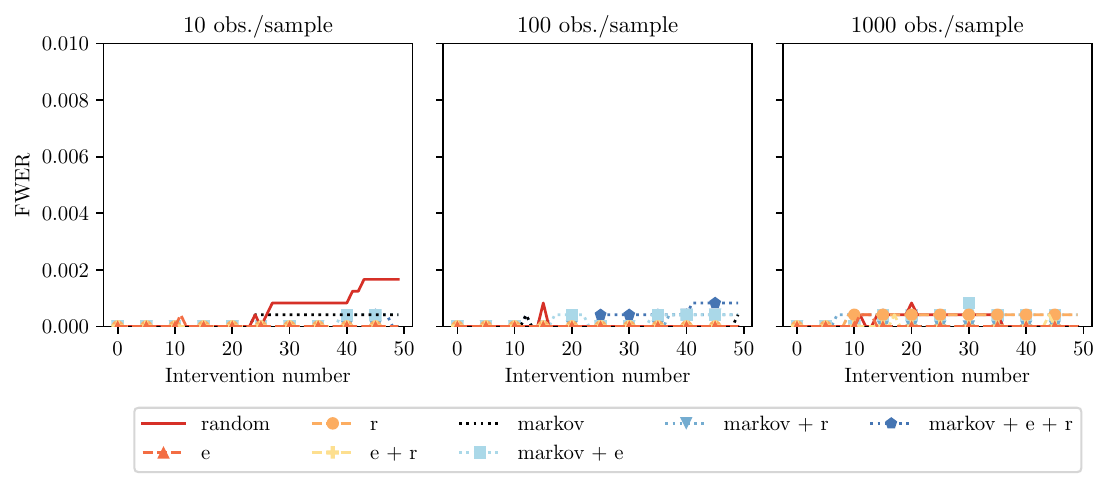}
\caption{(finite regime) Family-wise error rate (FWER) for the finite-sample experiments. The FWER $\hat{P}(\hat{S} \not\subseteq S^*)$, i.e. the probability of wrongly marking as direct causes variables which are not, is kept below the $0.01$ significance level at which \aicp is run, maintaining the coverage guarantees provided by Invariant Causal Prediction (ICP) \cite{peters2016causal}. }
\label{fig_fwer}
\end{figure}

\subsubsection{Jaccard similarity and average number of interventions for exact recovery}\label{apx:50_iter_jacc}
The results in \autoref{fig_50_iter_int_numbers_finite} and \autoref{fig_50_iter_jaccard_finite} illustrate the fact that if there are no constraints on the number of interventions, the \emph{random} policy is among the most robust options, as its choice of intervention targets is unaffected by estimation errors. However, it needs a large number of iterations to achieve competitive performance and only achieves an average Jaccard similarity close to one when $t$ approaches 50. 

Overall, the \emph{empty-set} strategy is the best performer across all sample sizes for a large range of intervention numbers. For the \emph{Markov} policies, the issues arising from obtaining an estimate of the Markov blanket are more apparent in this setting: while the policies quickly identify parents contained in the estimate, they become stuck performing non-informative interventions and fail to identify the remaining parents for some SCMs. This can be seen in \autoref{fig_50_iter_int_numbers_finite} which shows the average number of interventions needed to achieve exact recovery (averaged over different random seeds). 

While at 1000 obs./sample, combining the \emph{ratio} strategy with the \emph{empty-set} strategy grants an advantage in performance over using the \emph{empty-set} strategy alone for the early iterations, this advantage is lost later on as the combination performs worse for some particular graphs, which decreases the average performance. Combining the \emph{ratio} with the \emph{empty-set} strategy can in some cases be less effective than the \emph{empty-set} strategy alone for the following reasons. First, the interventions chosen here are quite strong such that the \emph{empty-set} strategy is \emph{not} affected by the issue of statistical power that the empty set may be wrongly accepted after an intervention on a parent, which would then be discarded from future interventions. Second, in the finite regime, it is not necessarily sufficient to intervene on a parent once for it to appear in $\hat{S}$ due to a lack of power. In other words, after an intervention on a parent not all unstable sets are necessarily rejected. In contrast, intervening on children of the response can sometimes lead to a larger number of unstable sets being rejected and hence an estimate $\hat{S}$ with larger Jaccard similarity. Intervening on children of the response tends to occur more often when using the \emph{empty-set} strategy alone. Lastly, the \emph{ratio} strategy is subject to the following testing errors: falsely rejecting stable sets and wrongly accepting unstable sets can bias the estimate of the stability ratio of some parents and thereby keep \aicp from intervening on such parents. Since a rejected set is not re-tested at future iterations (by \autoref{cor_speedup}), falsely rejecting a stable set at some iteration $t$ will also bias the estimate of the stability ratio for future iterations (as long as the set remains stable). While this discussion highlights the failure cases of the \emph{ratio} strategy, the analysis in \autoref{apx:figures:level_strength} shows that for smaller intervention strengths the \emph{empty-set} strategy is not always the best-performing policy, presumably due to the power issue described above.  

\begin{figure}[H]
\centering
\includegraphics[width=1.0\textwidth]{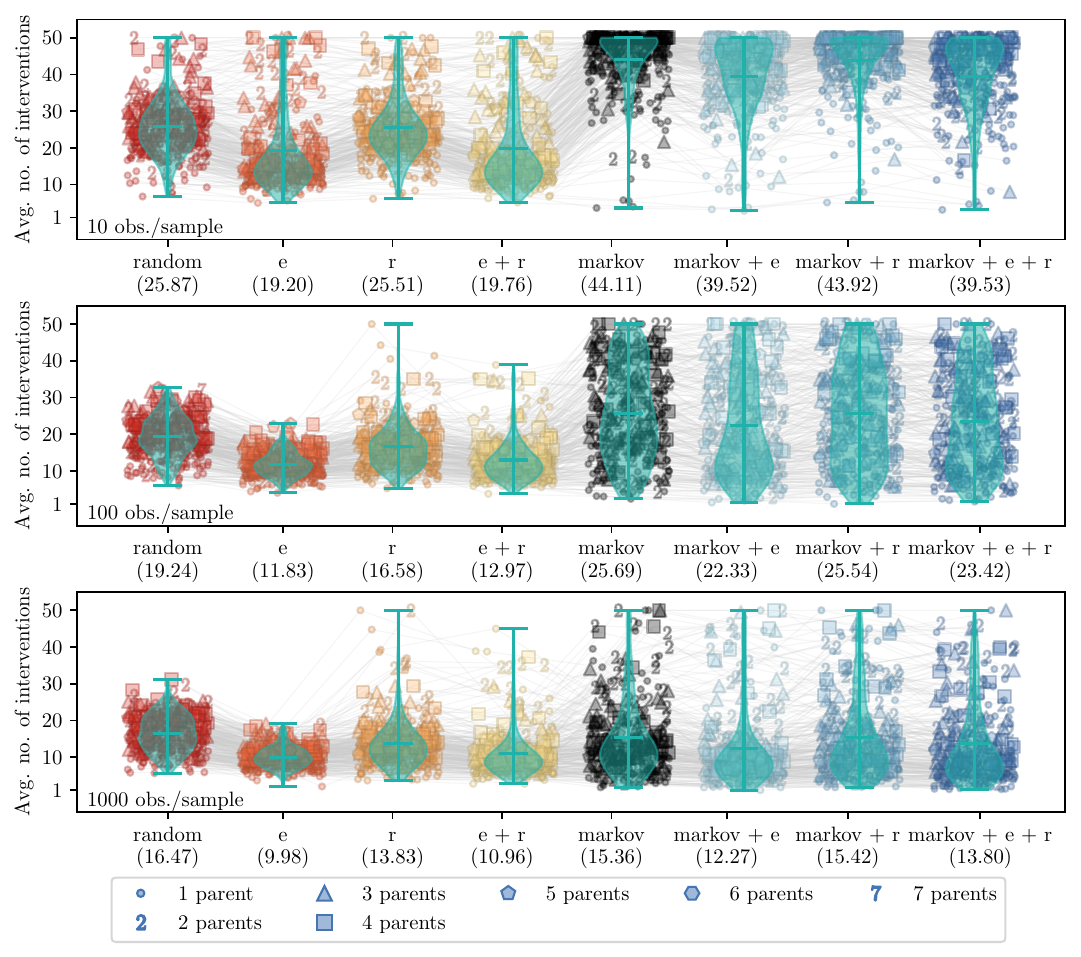}
\caption{(finite regime)  Average number of interventions until the causal parents are recovered exactly for $T=50$, for each one of the 300 SCMs. If $\hat{S}\neq S^*$ at $t=50$, we set the statistic to 50. For each SCM, we average the performance over the 8 different random seeds considered. Each SCM is represented by a dot and connected across policies by a grey line. The total average of interventions employed by each policy is given below its label.}
\label{fig_50_iter_int_numbers_finite}
\end{figure}

\begin{figure}[H]
\centering
\includegraphics[width=1.0\textwidth]{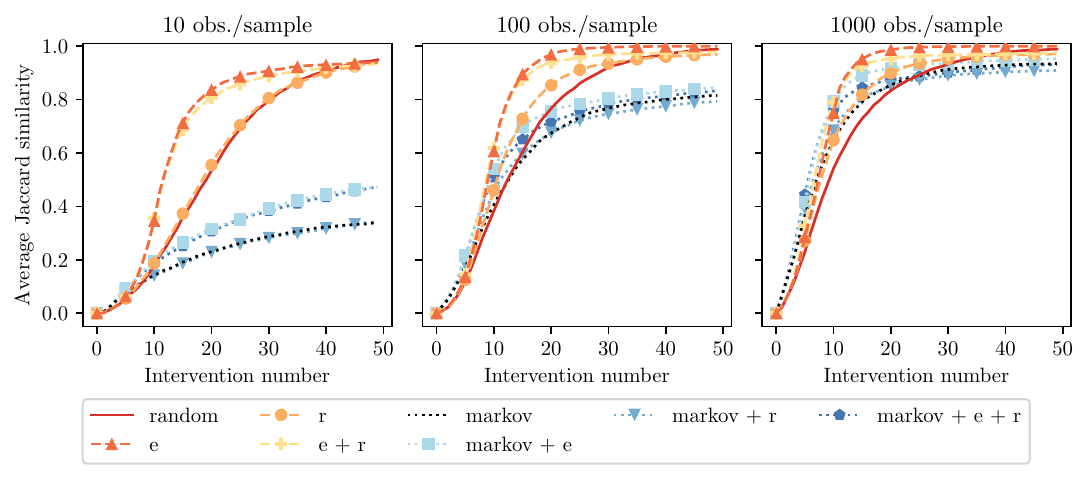}
\caption{(finite regime) Average Jaccard similarity for 300 SCMs of size 12 as a function of the number of interventions for 10, 100 and 1000 observations per sample. Here, $\alpha = 0.01$ and the policies' performance is shown for all $50$ iterations.}
\label{fig_50_iter_jaccard_finite}
\end{figure}

\subsection{Effect of the observational sample size on the performance of ABCD and \aicp}
\label{apx:figures:abcd}

\begin{figure}[H]
\centering
\includegraphics[width=1.0\textwidth]{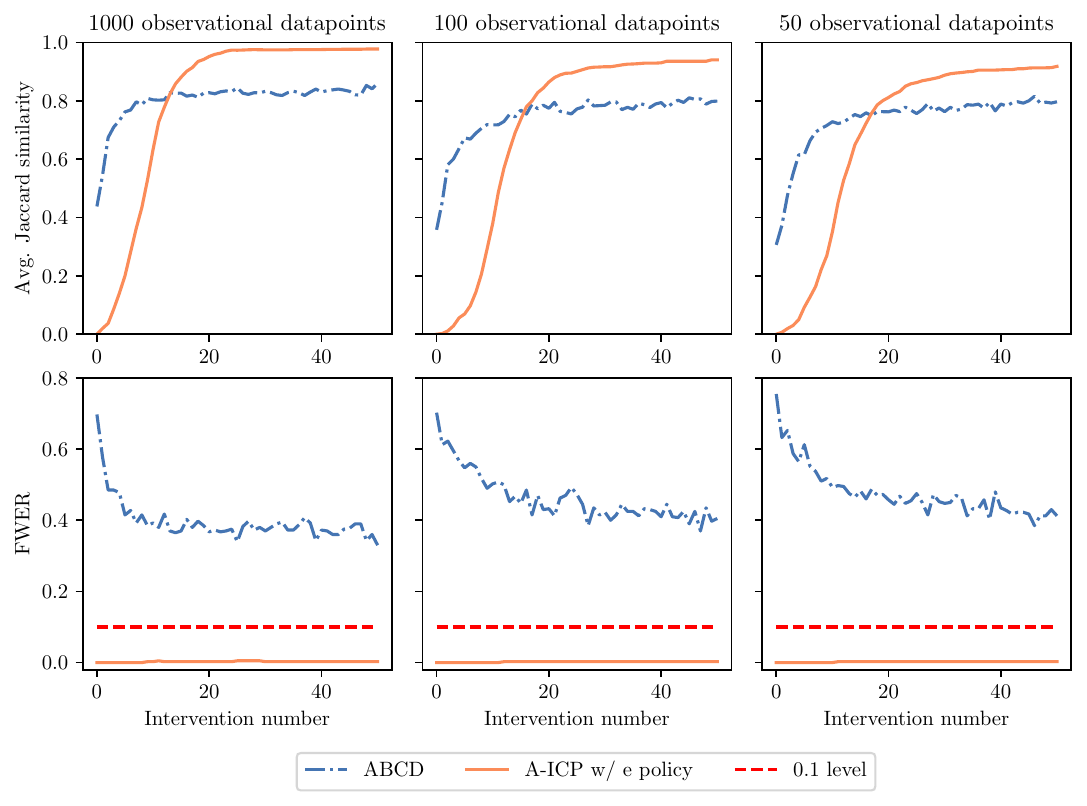}
\caption{Average Jaccard similarity (top) and family-wise error rate (FWER) (bottom) of the ABCD and \aicp estimates, as a function of the number of interventions (50 in total). Both the  Jaccard similarity and family-wise error rate of ABCD are affected by the size of the initial observational sample. While the performance of \aicp improves slightly with a larger observational sample, having a large observational sample is not a requirement of \aicp. In all cases, \aicp maintains the FWER under the desired level ($\alpha = 0.1$) while its power increases steadily with the number of interventions. Details about the experimental setup can be found in \autoref{apx:experiments}.}
\label{fig_abcd_vs_aicp_all}
\end{figure}

For the results summarized in \autoref{fig_abcd_vs_aicp_all}, we vary the size of the initial observational sample while keeping the number of observations per interventional sample fixed at 10. For the observational sample we consider 50, 100 and 1000 observations. ABCD requires a large observational sample to obtain a sufficiently good estimate of the posterior over graphs. This leads to relatively large Jaccard similarities for the first few iterations. In contrast, \aicp remains conservative at the beginning, often returning the empty set as an estimate as a large number of predictor sets are stable for small $t$. While \aicp controls the nominal FWER of $\alpha = 0.1$ over all iterations, its power increases steadily with the number of interventions, reaching an average Jaccard similarity close to one for large $t$. In contrast, ABCD does not control the false positives: while the average Jaccard similarity increases with the number of iterations, it does not approach one since the estimate still contains false positives even for large $t$. The comparison for different observational sample sizes shows that both the average Jaccard similarity and the FWER improve for ABCD the larger the initial observational sample is.

\subsection{Error analysis of the Markov blanket estimation procedure}
\label{apx:figures:mb_estimation}

\begin{figure}[H]
\centering
\includegraphics[width=0.82\textwidth]{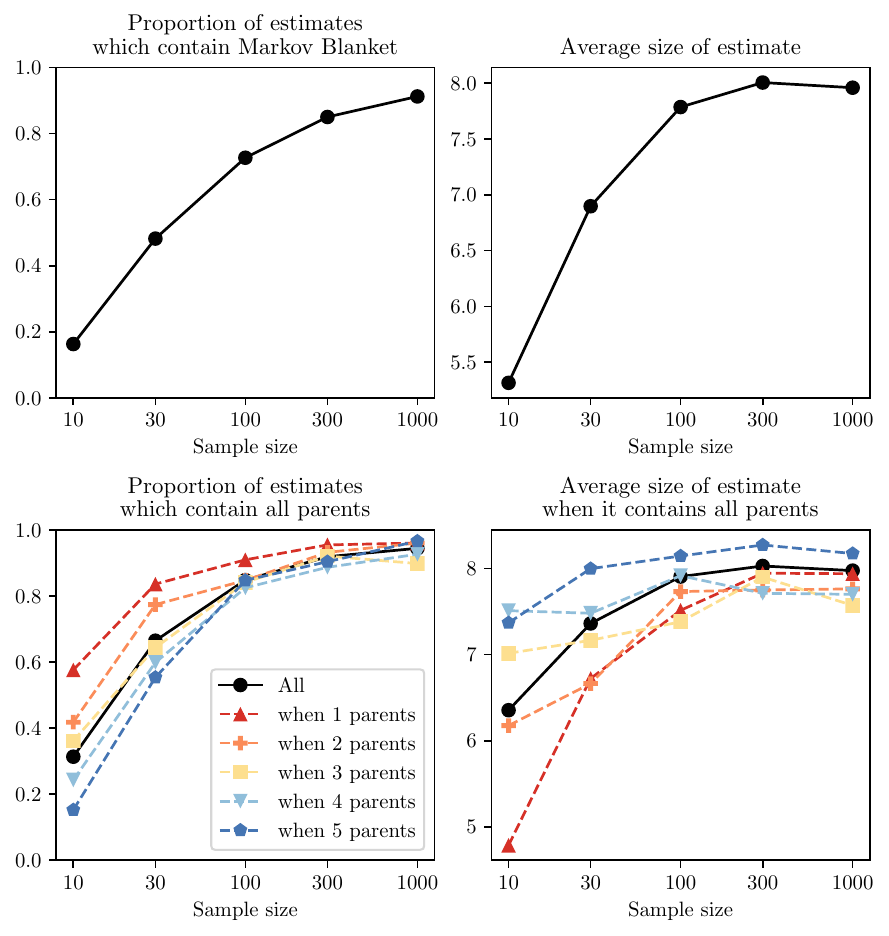}
\caption{Error analysis of the Markov blanket estimation procedure for the finite regime, for 1000 SCMs of size 12. The estimate is produced by taking the variables with non-zero coefficients resulting from a Lasso regression over all predictors in the sample from the observational environment. The regularization parameter is picked individually for each SCM by ten-fold cross-validation. By size we refer to the number of variables included in the estimate. As expected, the quality of the estimate improves with the sample size. However, even at the largest sample size, for some SCMs not all the parents are contained in the estimate. In these cases the policies relying on the \emph{Markov} strategy become stuck performing non-informative interventions, and fail to recover all parents after the limit $T$ of iterations is reached.}
\label{fig_mb_estimation}
\end{figure}

In figure \ref{fig_mb_estimation} we provide further analyses to understand the behavior of the policies using the \emph{Markov} strategy. For 1000 SCMs of size 12, the Markov blanket is estimated with the Lasso using the observational sample. The regularization parameter is chosen by ten-fold cross-validation. In \autoref{fig_mb_estimation} we plot (i) the proportion of estimates which contain the Markov blanket (top left); (ii) the average size of the estimate where size refers to the number of variables included in the estimate (top right); (iii) proportion of estimates which contain all parents (bottom left); and (iv) the average size of the estimate when it contains all parents (bottom right). At smaller sample sizes, often not all parents are included in the estimate (bottom left). Hence, policies using the \emph{Markov} strategy do not intervene on them which often results in a failure to identify them. While this issue is attenuated for larger sample sizes, it does not disappear entirely, even for a sample size of 1000. This explains why the policies using the \emph{Markov} strategy have a lower average Jaccard similarity for large $t$, as can be seen in \autoref{fig_50_iter_jaccard_finite}.

\subsection{\aicp significance level and intervention strength}
\label{apx:figures:level_strength}

\begin{figure}[H]
\centering
\includegraphics[width=1.0\textwidth]{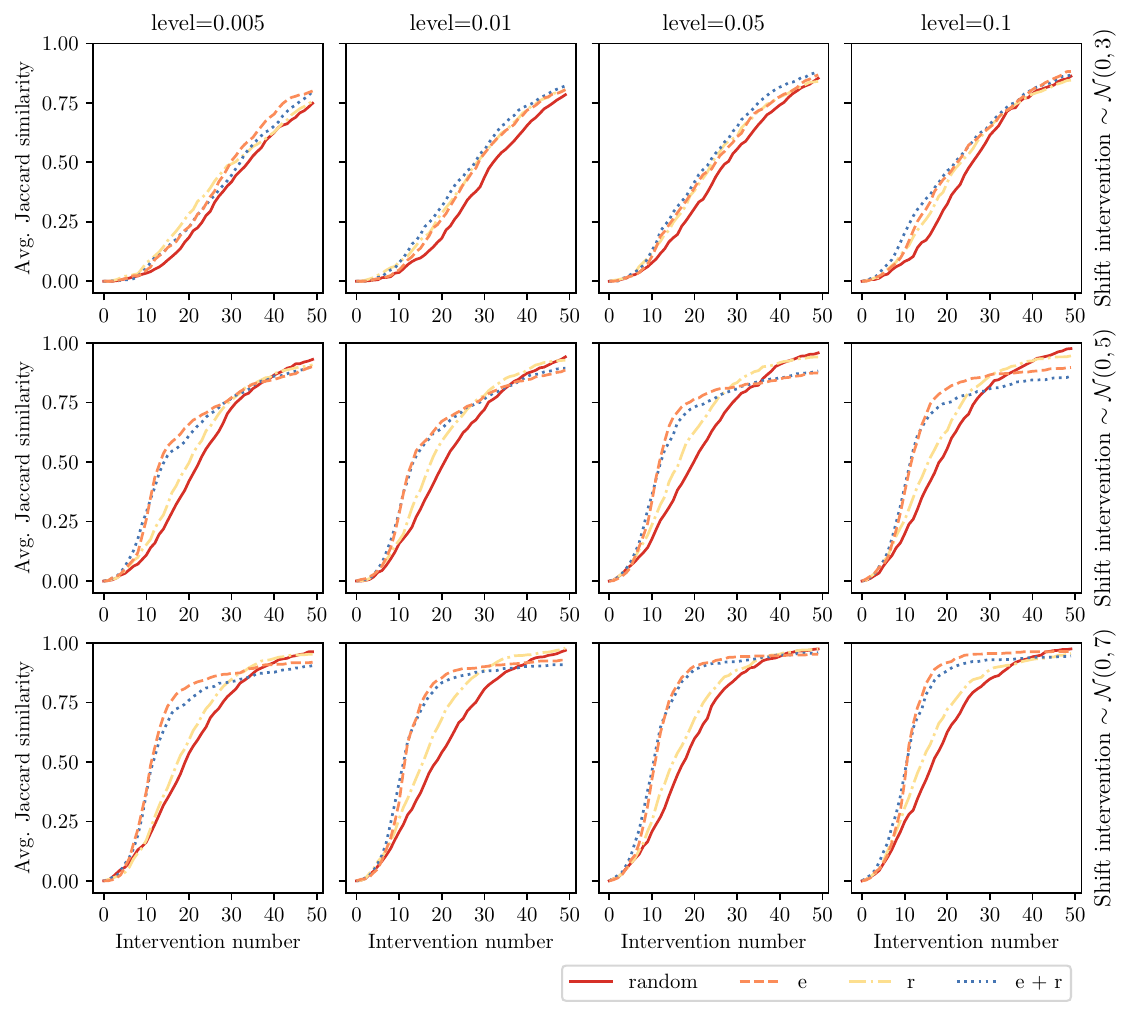}
\caption{(finite regime) Performance of the policies for an observational sample of size 100 and 10 observations per interventional sample, for different intervention strengths (rows), and significance levels of \aicp (columns). Overall, the gains in performance over the \emph{random} policy increase with the intervention strength. The performance of the \emph{empty-set} policy increases with the level, as power also increases and the empty set is rejected more often. The \emph{ratio} policy is largely unaffected by the change in level, and often yields additional improvements when used in combination with the \emph{empty-set} strategy in the initial iterations. 
} 
\label{fig_strength}
\end{figure}

To correct for multiple testing of the accepted sets, we apply a Bonferroni correction to the significance level of the statistical tests performed in each round of \aicp (see algorithm \autoref{alg_active_icp}). To assess the sensitivity of the results with respect to the overall significance level of \aicp $\alpha$ and the intervention strength, we run \aicp at $100$ observational data points and $10$ observations per interventional sample for $\alpha \in \{0.005, 0.01, 0.05, 0.1 \}$, and shift interventions with variance $1$ and means $3,5$ and $7$. Details about the experimental settings can be found in \autoref{apx:experiments}.

\autoref{fig_strength} shows that the \emph{ratio} and the \emph{empty-set} strategy yield larger improvements over the \emph{random} policy for larger intervention strengths. This is to be expected as statistical power increases with the intervention strength and both the \emph{ratio} and the \emph{empty-set} strategy rely on statistical testing to choose the intervention target. While the results reported in \autoref{sec:experiments} are based on experiments with strong interventions (shift interventions with variance $1$ and mean $10$), the relative performance between the \emph{ratio} and the \emph{empty-set} strategy changes when considering weaker interventions. For instance, for interventions with mean $5$ (second row), the \emph{empty-set} strategy does not reach an average Jaccard similarity of one for $t=50$. For large $t$, the \emph{ratio} as well as the \emph{random} strategy perform better.

\section{Experimental settings}
\label{apx:experiments}

\begin{table}[h]
\caption{Overview of the experimental parameters considered. Below, $n_e$ denotes the number of interventions per interventional sample. For the ABCD experiments, we additionally vary the size of initial observational sample. Unless indicated otherwise, all interventions are shift interventions with different means $\mu$ and variance $\sigma^2=1$.
\label{tab:exp_settings}}
{\small 
\begin{center}
{\renewcommand{\arraystretch}{1.4}
\begin{tabular}{c| ccccccc}
& \# SCMs & \# seeds & $n_e$ & $p$ & $T$ & $\alpha$ & Interventions \\
\hline
Population & 1000& 8& --- & 15& 15& ---  & $\mu=10$\\
Finite regime & 300 & 8 &  $\{10, 100, 1000\}$ & 12 & $50$ &  $.01$ & $\mu=10$ \\
ABCD & 100& 4& $10$ & 12& 50 & $.1$ & do, $\mu=9$\\
\autoref{fig_strength} & 100 & 4 & $10$ & 12 & $50$ &  $\{.005, .01, .05, .1\}$ & $\mu\in\{3,5,7\}$ \\
\end{tabular}}
\end{center}
}
\end{table}

The code to reproduce the experimental results is provided in the repositories \url{https://github.com/juangamella/aicp} and \url{https://github.com/juangamella/abcd}. Additionally, to generate synthetic interventional data we make the python package \texttt{sempler} (\url{https://github.com/juangamella/sempler}) available. 

\paragraph{Population setting} For the experiments, 1000 linear structural causal models of size 15 are randomly generated by sampling from Erd\H{o}s-R\'enyi graphs with an average degree of 3. The weights are sampled uniformly at random from $[0.5,1]$, and the intercepts and noise variances from $[0,1]$. In the population setting no further assumptions are made on the noise distributions, besides having finite mean and variance to perform the OLS regression. To perform the regression in the population setting, we maintain a symbolic representation of distributions that contains their first and second moments, and allows conditioning and marginalization. Further experiments with SCMs of different size and parameters yielded very similar results to the ones presented in the main text and are not shown separately. For every SCM, each policy is run 8 times with different random seeds, to account for the stochastic component of the policies.

\paragraph{Finite sample setting} For the experiments, 300 linear structural causal models of size 12 are randomly generated, again by sampling from Erd\H{o}s-R\'enyi graphs with an average degree of 3. The weights, variances and intercepts are sampled as in the population setting. Interventions are shift-interventions with mean 10 and variance 1. Like in the population setting, the policies are run 8 times with different random seeds, for 50 iterations. To simplify the implementation, we assume that the underlying noise distributions are Gaussian, and set ICP to use a two-sample t-test and F-test to check the invariance of the conditional distribution of the response. It is important to note that this is not a necessary requirement: the results derived in section 2 (e.g.\ \autoref{cor:ratio_parents}) apply to arbitrary SCMs with arbitrary noise distributions, and ICP can use other statistical tests, including non-parametric ones. However, we expect that the effect of the sample size on the results will be different under different noise distributions and tests. \autoref{fig_20_iter_jaccard_finite} corresponds to the results of running \aicp at a significance level of $0.01$.

\paragraph{Comparison with ABCD} We randomly generate 100 linear structural causal models of size 12, by sampling from Erd\H{o}s-R\'enyi graphs with an average degree of 3. The weights, variances and intercepts are sampled as in the population setting. ABCD requires a Gaussian SCM, so the underlying noise distributions are Gaussian and ICP is set to use a two-sample t-test and F-test to check the invariance of the conditional distribution of the response. At each iteration, each method receives 10 observations from the newly performed intervention. Interventions are do-interventions, as this is the only type of intervention that the ABCD implementation considers. Experiments are carried out for different sizes of the initial observational sample (see \autoref{fig_abcd_vs_aicp_all}), running each method a total of 4 times to account for stochasticity. The output of ABCD are posterior probabilities over parent sets; the average Jaccard similarity and FWER are computed by taking the argmax of the posterior. ABCD is set to use $100$ bootstrap samples and \aicp is run at a significance level of $0.1$.

\paragraph{Intervention strength vs. level (\autoref{fig_strength})} The experiments are run on 100 randomly generated linear structural causal models of size 12, sampled from Erd\H{o}s-R\'enyi graphs with an average degree of 3. The remaining parameters are sampled as in the population setting. We then compare the performance of the \emph{random}, \emph{empty-set} and \emph{ratio} strategies at different significance levels ($0.005, 0.01, 0.05$ and $0.1$) and intervention strengths, i.e. we use shift interventions with variance $1$ and means $3,5$ and $7$. We collect $100$ observations from the initial observational environment and $10$ observations from each interventional environment. Again ICP employs a t-test and F-test to check the invariance of the conditional distribution of the response. 

\section{Proofs}
\label{apx:proofs}

To simplify notation, let $\text{PA}(i)$ be the parents of $X_i$ and let $\text{PA}(S) = \{j \in \{1,...,p\} \mid \exists i \in S: j \in \text{PA}(i)\}$ denote the parents of variables in a set $S$. Similarly, let $\text{CH}(i)$ be the children of $X_i$ and let $\text{CH}(S) = \{j \in \{1,...,p\} \mid \exists i \in S: j \in \text{CH}(i)\}$ denote the children of variables in a set $S$. Let $\text{DE}(S) = \{j \in \{1,...,p\} \mid \exists i \in S: j \in \text{DE}(i)\}$ denote the descendants of variables in a set $S$. Note that the descendants of a variable include the variable itself, i.e.\ $i \in \text{DE}(i)$.

\lemmaparents*
\begin{proof}
Assume $S \subseteq \{1,...,p\}$ is an intervention stable set such that $j \notin S$, and let $I$ denote the direct intervention on $j$. Then, there is a path $I \rightarrow j \rightarrow Y$ that is unblocked by $S$, which contradicts $S$ being intervention stable.
\end{proof}

\lemmachildren*
\begin{proof}
Let $I$ denote the direct intervention on $i$, and let $S \subseteq \{1,...,p\} : S \cap \text{DE}(i) \neq \emptyset$. Then, the path $Y \rightarrow i \leftarrow I$ is not blocked by $S$.
\end{proof}

\lemmaempty*
\begin{proof}
($\implies$) Assume the empty set is stable under environments $\cc{E}$ which contain an intervention $I$ on $j \in \text{AN}(Y)$. Then there exists a path $Y \leftarrow ... \leftarrow j \leftarrow I$ which is not blocked by the empty set, arriving at a contradiction. ($\impliedby$) For every intervention $I$ on a variable $i$, every path from $Y$ to $I$ either
\begin{enumerate}[label=(\roman*)]
    \item contains a collider, and is thus blocked by $\emptyset$, or
    \item does not contain a collider and is active under $\emptyset$.
\end{enumerate}
Since $I$ is a source node, paths of type (ii) can only be of the form $Y \leftarrow ... \leftarrow i \leftarrow I$,  which is not possible as $i$ would then be an ancestor of $Y$.
\end{proof}

\propratio*
\begin{proof}
We will prove the equivalent statement $j \in \text{AN}(Y) \implies r_{\cc{E}}(j) \geq 1/2$. For any $i \in \{1,...,p\}$ we have that
$$r_\cc{E}(i) = \frac{|\{S \in \bb{S}_\cc{E} : i \in S\}|}{|\{S \in \bb{S}_\cc{E} : i \in S\}| + |\{S \in \bb{S}_\cc{E} : i \notin S\}|},$$
and therefore
\begin{align}\label{eq_proof_ratio_parents}
    r_\cc{E}(i) \geq 1/2 \iff |\{S \in \bb{S}_\cc{E} : i \in S\}| \geq |\{S \in \bb{S}_\cc{E} : i \notin S\}|.
\end{align}
We will show that for any $j\in \text{AN}(Y)$, and any intervention stable set $S$ such that $j \notin S$, the set $S \cup \{j\}$ is also intervention stable, satisfying the right hand side of \autoref{eq_proof_ratio_parents}. To do this, we will use the fact that $S$ d-separates the response from all interventions, and show that the same is true for $S \cup \{j\}$, making it intervention stable.

Let $I$ denote an intervention on a variable $i$. For every path connecting $Y$ and the intervention, either

\begin{enumerate}[label=(\roman*)]
    \item $j$ appears in the path as a collider,
    \item $j$ appears in the path but not as a collider,
    \item $j$ does not appear in the path but is downstream of a collider, or
    \item $j$ does not appear in the path and is not downstream of a collider.
\end{enumerate}
If $S$ blocks paths of type (ii) and (iv), $S \cup \{j\}$ also does. Assume now there is a path of type (i) or (iii) which is blocked under $S$ but active under $S \cup \{j\}$. This implies that such path is blocked by a collider $c$ such that $j \in \text{DE}(c)$ and $S \cap \text{DE}(c) = \emptyset$; thus, there exists a path $Y \leftarrow ... \leftarrow j \leftarrow ... \leftarrow c \leftarrow ... \ i \leftarrow I$ which is active under $S$, i.e.\ $S \notin \bb{S}_\cc{E}$.

Therefore, for all $S \in \bb{S}_\cc{E}$ such that $j \notin S$, we have that $S \cup \{j\} \in \bb{S}_\cc{E}$, and
$$|\{S \in \bb{S}_\cc{E} : j \in S\}| \geq |\{S \in \bb{S}_\cc{E} : j \notin S\}| \implies r_\cc{E}(j) \geq 1/2.$$
\end{proof}

\propcausalpredictors*
\begin{proof}
The following is based on proof of proposition 3 in \cite{pfister2019stabilizing}.

Let $\cc{E}$ be a set of observed environments, and let $S\in\bb{S}_\cc{E}$ be an intervention stable set. From \cite{peters2016causal} we know that $S$ is a set of plausible causal predictors iff $Y^e|X^e_S$ remains invariant for all environments $e\in\cc{E}$. Starting from setting 1, introduce an auxiliary random variable $E$ taking values in $\cc{E}$ with equal probability (for simplicity). To model the environments we construct an extended SCM $\cc{S}^\cc{E}_{\text{full}}$, where the variable $E$ appears on the assignments of the intervention variables $I$, and the assignments of the remaining variables remain as in $\cc{S}^\cc{E}$. As such, in $\cc{G}(\cc{S}^\cc{E}_{\text{full}})$ $E$ is a source node with only edges into the variables in $I$. The SCM $\cc{S}^\cc{E}_{\text{full}}$ induces a distribution $P_\text{full}$ over $(E,I,X,Y)$, which under setting 1 has a density $p$ that factorizes with respect to a product measure. Furthermore, since $P_\text{full}$ satisfies the Markov properties \cite{pearl2009causal} and $S$ d-separates the response from all the intervention variables in $I$, it holds that $E \indep Y \mid X_S \sim P_{\text{full}}$. Therefore, for every environment $e \in \cc{E}$, we have that
\begin{align*}
p(Y^e = y \mid X^e_S = x) &= p(Y = y \mid X_S = x, E = e)
\\&= \frac{p(Y = y\mid X_S = x)p(E = e \mid X_S = x)}{p(E = e \mid X_S = x)}\\
&= p(Y = y \mid X_S = x),
\end{align*}
and $Y^e \mid X^e_S$ remains invariant for all environments $e\in\cc{E}$.
\end{proof}

\end{document}